\newtheorem{theorem}{Theorem}[section]
\newtheorem{lemma}[theorem]{Lemma}
\newtheorem{corollary}[theorem]{Corollary}
\newtheorem{definition}[theorem]{Definition}
\newenvironment{proofsketch}{\begin{proof}[Proof Sketch]}{\end{proof}}
\begin{document}

\title{Contention Resolution with Predictions}

\author{Seth Gilbert}
\email{seth.gilbert@comp.nus.edu.sg}
\affiliation{
    \institution{National University of Singapore}
    \country{Singapore}
    }

\author{Calvin Newport}
\email{cnewport@cs.georgetown.edu}
\affiliation{
    \institution{Georgetown University}
    \country{United States}
    }

\author{Nitin Vaidya}
\email{nitin.vaidya@georgetown.edu}
\affiliation{
    \institution{Georgetown University}
    \country{United States}
    }
    
\author{Alex Weaver}
\email{aweaver@cs.georgetown.edu}
\affiliation{
    \institution{Georgetown University}
    \country{United States}
    }





\begin{abstract}
In this paper, we consider contention resolution algorithms that are augmented with predictions
about the network. We begin by studying the natural setup in which the algorithm is provided a distribution defined over the possible network sizes that predicts the likelihood of each size occurring. The goal is to leverage the predictive power of this distribution to improve on worst-case time complexity bounds. Using a novel connection between contention resolution and information theory,
we prove lower bounds on the expected time complexity with respect to the Shannon entropy
of the corresponding network size random variable, for both the collision detection and no collision detection
assumptions.
We then analyze upper bounds for these settings, 
assuming now that the distribution provided as input might differ from the actual distribution generating network sizes.
We express their performance with respect to both entropy
and the statistical divergence between the two distributions---allowing us to quantify the cost of poor predictions.
Finally, we turn our attention to the related perfect advice setting,
parameterized with a length $b\geq 0$, in which all active processes in a given execution
are provided the best possible $b$ bits of information about their network.
We provide tight bounds on the speed-up possible with respect to $b$ for deterministic
and randomized algorithms, with and without collision detection.
These bounds provide a fundamental limit on the maximum power that can be provided by
any predictive model with a bounded  output size.
\end{abstract}

\maketitle

\begin{acks}
This work was funded by Singapore Ministry of Education Grant MOE2018-T2-1-160 (Beyond Worst-Case Analysis: A Tale of Distributed Algorithms) and NSF CCF Awards 1733842 and 1733872 (AiTF: Collaborative Research: Algorithms for Smartphone Peer-to-Peer Networks).
\end{acks}

\section{Introduction}
\label{sec:introduction}

In this paper, we study distributed algorithms that leverage predictions about their environment to improve worst-case performance bounds.
Motivated by recent investigations in the context of sequential algorithms (e.g.,~\cite{mitzenmacher2020algorithms}),
we imagine these predictions might be generated in practice by machine learning models able to observe the behavior of a given
environment over time.
As Mitzenmacher and Vassilvitskii note in their recent study of the ski rental problem with predictions~\cite{mitzenmacher2020algorithms},
one appealing goal for this general approach is to produce algorithms that perform no worse than our current optimal
solutions, but that will subsequently improve ``for free'' as the machine learning models generating the predictions they
leverage improve in the future.

To help establish foundations for studying distributed algorithms with predictions,
we focus on the classical contention resolution problem, as it is both simple and well-studied.
We begin by producing new lower and upper bounds on the speed-up achievable when the algorithm is provided a predicted distribution over network sizes. 
To achieve the strongest results, our lower bounds assume the distribution accurately describes the likelihood of each network size occurring, while our upper bounds explicitly capture how this performance degrades with respect to the Kullback–Leibler (KL) divergence between the predicted distribution and the actual distribution.
We also study lower and upper bounds for a {\em perfect advice} setting in which all active processes are provided the same $b$ bits of 
advice generated by a function with perfect knowledge of the network. The goal here is to understand the maximum possible improvement for a given amount of information.

As we elaborate below, our network size prediction results are built on a novel connection between contention resolution and information theory, leading to speed-up results expressed with respect to the entropy of the size distribution, lower bounds that leverage
the classical Source Code Theorem for optimal coding on noiseless channels,
and upper bounds that use optimal codes in their design. A nice property of our information theory-based
lower bounds is that  when they are applied to distributions with high entropy, they match (or almost match) 
longstanding existing lower
bounds for contention resolution with uniform algorithms, highlighting yet another mathematical framework through which to understand
this fundamental problem.

\subsection{Contention Resolution with Network Size Predictions}
The contention resolution problem assumes an unknown and non-empty subset $P\subseteq V$ of players from $|V| = n$ possible
participants are activated and connected to a shared channel.
Time proceeds in synchronous rounds.
In each round, each player can choose to transmit or receive. If two or more players transmit, all messages in that round
are lost due to collision.
If the channel supports collision detection, all players (including the transmitters) detect a collision. If the channel does not support collision detection, then players detect silence.
The problem is solved in the first round in which exactly one player transmits.

In resolving contention on a shared channel, knowing the size $k$ of the participant set $P$ is useful.
Indeed, many of the standard optimal worst-case algorithms operate by efficiently trying to find a good
estimate of this size. 
The {\em decay} strategy for channels without collision detection, for example,
introduced by Bar Yehuda et al.~\cite{bgi}, 
can be understood as cycling through $\log{n}$ geometrically-distributed 
guesses of the network size.
Given collision detection,
Willard's algorithm~\cite{willard1986log} achieves optimal time complexity by conducting
a binary search over these same guesses.

In a worst-case setting in which no information is provided about the network size,
Jurdzi{\'n}ski and Stachowiak proved a lower bound 
of $\Omega(\log{n}/\log\log{n})$ on the expected rounds
to solve contention resolution with no collision detection and uniform algorithms (in which probabilities are fixed in advance)~\cite{jurdzinski2002probabilistic}.
Colton et al.~\cite{farach2006lower} later eliminated the $\log\log{n}$ factor,
and Newport~\cite{newport2014radio} subsequently generalized the bound to non-uniform algorithms.
These results  match the $O(\log{n})$
expected round complexity of the {\em decay} strategy~\cite{bgi}.
When collision detection is available,
Willard establishes
 $\Theta(\log\log{n})$ to be a tight bound~\cite{willard1986log}.

In both the collision detection and no collision detection settings,
if the algorithm is given an accurate estimate $\hat k = \Theta(k)$ of
the actual network size $k$,
the problem can be solved in $O(1)$ rounds in expectation by simply transmitting with probability $1/\hat k$ in each round.
In real networks, however, it may be difficult to predict with confidence the exact network size to expect next.
It is arguably more likely that learning models will generate {\em distributions} over the likelihood of various sizes.
The relevant question we explore in Section~\ref{sec:networksize}, therefore,
is how much prediction distributions of varying usefulness and quality allow us to speed-up from
the worst-case toward the best-case bounds.

In Section~\ref{sec:networksize}, we augment the standard contention resolution problem
to provide each algorithm as input the definition
of a random variable $X$ defined over the possible participant set sizes from $1$ to $n$.
The goal is to use the distribution defining $X$ to speed up contention resolution if possible.
(For concision, in the following we sometimes say the algorithm is provided 
a random variable over network sizes as input. In these instances, we mean that the algorithm is provided as input the full distribution defining the variable.)

As we argued earlier, however, we do not need the exact network size to solve contention
resolution fast. An estimate within a constant factor of the real size is sufficient.
Given a network size random variable $X$, let $c(X)$ be the {\em condensed} version of $X$ that aggregates the probability mass over
$\lceil \log{n} \rceil$ values ranges of geometrically increasing size.
Formally,
for each $i\in [\lceil \log{n} \rceil]$, define:

\[ \Pr\left(c(X) = i\right) = \sum_{2^{i-1} < j \leq 2^i} p_j. \]

Intuitively, knowing a value $i\in [\lceil \log{n} \rceil]$, 
such that $k = \Theta(2^i)$, is sufficient. 
We therefore express the bounds that follow for a given $X$ with respect to $c(X)$.

\begin{table}
\centering
\renewcommand{\arraystretch}{1.5}
\begin{tabular}{ | c | c | c |}
    \hline
  & \makecell{Lower \\ Bounds (*)} & \makecell{Upper \\ Bounds (*)} \\ 
 \hline
 \makecell{No Collision \\ Detection} & $\Omega\bigg(\frac{2^{H(c(X))}}{\log{\log{n}}}\bigg)$ & $O\Big(2^{2H(c(X))}\Big)$ \\  
 \hline
 \makecell{Collision \\ Detection} & $\frac{H(c(X))}{2}-O(\log{\log{\log{\log{n}}}})$ & $O(H^2(c(X)))$ \\
 \hline
\end{tabular} 
\renewcommand{\arraystretch}{1}

\caption{\label{tab:predictionresults} Here we summarize our results for contention resolution with network size predictions. $H(c(X))$ represents the entropy of the condensed probability distribution over the possible network sizes.\\
(*) Lower bounds are expressed as the expected number of rounds while upper bounds are expressed with respect the number of rounds with at least constant probability. }
\end{table}

\subsubsection*{Lower Bounds}

We begin with lower bounds. 
To achieve the strongest possible result, we assume that the variable $X$ is accurate,
in the sense that the network will actually determine the number of participants according to $X$.
We also assume {\em uniform} algorithms in which all players use the same transmission probability
in each round from a fixed schedule.\footnote{It is important to point out that until recent work~\cite{newport2014radio},
most existing lower bounds for contention resolution assumed uniform algorithms.}

Intuitively, if $c(X)$ places all of its probability mass on a single network size range, then we are in the perfect
prediction setting and can solve the problem in $O(1)$ rounds.
On the other extreme, if $c(X)$ describes a uniform distribution, we are likely unable to do much better
than the worst-case bounds.
To describe these cases we need a property of these distributions that succinctly captures their predictive power.
The natural candidate here is {\em entropy}---and this is indeed what we we end up deploying in our results.

We begin in Section~\ref{sec:predictions:lower:nocd} by consider contention resolution with no collision detection.
We prove that for a given network size random variable $X$,
contention resolution requires $\Omega\left( \frac{2^{H(c(X))}}{\log\log{n}} \right)$ rounds in expectation,
where $H$ denotes Shannon entropy.
For maximum values of $H(c(X))$, 
this reduces to $\Omega(\log{n}/\log\log{n})$ rounds,
exactly matching the original uniform algorithm lower bound from~\cite{jurdzinski2002probabilistic}---a result that is
interesting in its own right, as it shows yet another approach to proving limits to this fundamental problem.

The core idea driving this bound is a connection between contention resolution and coding.
Using a reduction argument involving an intermediate game we call range finding,
we formalize the intuition that solving contention resolution requires an algorithm to try a transmission probability
relatively close to the optimal probability.
We can therefore use a contention resolution algorithm designed for $X$ to help construct a code for a given symbol source $c(X)$.
Let $s\in [\lceil \log{n} \rceil]$ be the symbol to be transmitted.
We can then consider the execution of our algorithm in a network of size $2^s$,
and identify the first round $r$ such that the algorithm attempts a transmission probability near $2^{-s}$,
and send $r$ to the receiver.
Because the receiver knows the same algorithm, it can use the probability scheduled for round $r$ to determine
$s$. If the algorithm terminates fast in expectation with respect to $X$,
then the expected code length of our code for $c(X)$ is small.
The Source Code Theorem (see Section~\ref{sec:networksize:prelim}), however,
tells us that the average code length for $c(X)$ is lower bounded by $H(c(X))$.
Because, roughly speaking, the code length is computed as the logarithm of the  round complexity,
we get a bound of roughly $2^{c(H(X))}$ rounds.

In Section~\ref{sec:predictions:lower:cd},
we leverage a similar connection between contention resolution and information theory
to prove that given collision detection,
uniform algorithms that solve contention resolution with respect to a network size variable $X$ require
$\Omega(H(c(X)))$ in expectation. For the maximum possible entropy of $c(X)$,
this matches the original $\Omega(\log\log{n})$ lower bound proven
by Willard, using non-information theoretic techniques~\cite{willard1986log}.

A uniform algorithm in the collision detection setting can be understand as a function that maps
a binary collision history\footnote{In the collision detection model, in every round,
either every player detects a collision or no player detects a collision.
We can therefore encode this history for the $r$ rounds of an execution
as a binary string $b_1b_2\ldots b_r$, where $b_i = 0$ if there was no collision
in round $i$, and $b_i=1$ if there was a collision in $i$.} to the uniform probability that all players
try in the current round given the current collision history.
We show how to construct a code for $c(X)$ for such an algorithm by directly sending
the shortest collision history that corresponds to a probability well-matched to the current
symbol. (As in the no collision detection case, there is actually an intermediate step involved
here where our contention resolution algorithm reduces to a simpler game called range finding,
and it is in fact this range finding solution from which we generate our code.)
In this case, they round complexity is the same as the code length. It follows
that $H(c(X))$ must bound the expected value of the former.

\subsubsection*{Upper Bounds}

We next turn our attention to produce contention resolution upper bounds that
take a network size random variable as input and attempt to come as close as possible
to matching the relevant lower bounds.
We start in Section~\ref{sec:predictions:upper:nocd} with an algorithm for the no collision detection setting.
Let $X$ be the actual distribution from which the network size will be drawn.
Let $Y$ be the distribution the algorithm is given as input (e.g., the distribution learned).
We analyze a natural strategy: sort the values in $[\lceil \log{n} \rceil]$ in decreasing order of likelihood
given $c(Y)$; visit these values in turn, for each such $i$ transmitting withing probability $2^{-i}$.

We prove that with constant probability, this strategy succeeds in $O(2^T)$ rounds,
where:

\[T=2H(c(X)) + 2D_{KL}(c(X)\Vert c(Y)),\] 

\noindent and $D_{KL}$ denotes the Kullback–Leibler divergence between the two distributions
(see Section~\ref{sec:networksize:prelim}).
When $Y=X$, this simplifies to $O(2^{2H(c(X))})$ which matches the same general exponential form of our lower bound, but
includes an extra constant factor in the exponent.
We conjecture that something like this extra factor may be fundamental in the analysis of this natural strategy. We support this conjecture by noting a straightforward application of a cryptography result due to Pliam~\cite{pliam2000incomparability} indicates that for every constant $\alpha \geq 1$,
there is a random variable $X_{\alpha}$ such that this strategy requires more than $\alpha 2^{H(c(X_\alpha))}$ rounds to succeed with constant probability.\footnote{To be more precise,
the result in~\cite{pliam2000incomparability} can be used to show more than 
$\alpha 2^{H(c(X_{\alpha}))}$
rounds are needed to arrive at the exact correct value. Formalizing this result would require
quantification of the reality that an exact guess is not needed to solve the problem.
This crude application, however, serves its purpose of bolstering our conjecture that $2^{H(c(X))}$
is insufficient.}

An important characteristic of the Kullback–Leibler divergence is that is if each probability
in $Y$ is off by at most a bounded constant fraction from the real probability in $X$,
$D_{KL}(c(X)\Vert c(Y)) = O(1)$.  This establishes that our algorithm does not require a
 precise prediction to be useful, and its efficiency will increase smoothly along with prediction quality.

In Section~\ref{sec:predictions:lower:cd},
we consider algorithms for the collision detector setting.
Given network size variable $Y$, the algorithm first constructs an optimal
code $f$ with respect to source $c(Y)$.
It then considers all codes from shortest to longest in length.
For each length $\ell$, it considers all symbols mapped to codes of this length.
Ordering these symbols from smallest to largest, it deploys the collision detector-driven
binary search strategy introduced by Willard~\cite{willard1986log} to explore
if any of these symbols correspond to the correct network size.

We show this algorithm solves contention resolution
with constant probability in $O((H(c(X)) + D_{KL}(c(X)\Vert c(Y))^2)$ rounds,
which reduces to $O((H(c(X)))^2)$ when the probabilities in $Y$ are all within
a bounded constant factor of the real probabilities in $X$.
As before, our upper bound shares the same general form as the lower bound
(in this case, a result expressed as a polynomial of $H(c(X))$), but is not exactly tight.
Closing these final gaps between the upper and lower bounds for small divergence
is not obvious, but likely tractable; and therefore
left as intriguing future work.
We also note that for clarity, we focused only on one-shot attempts
to resolve contention that succeed with constant probability.
Achieving good bounds on expected time also represents important future work.

\subsection{Contention Resolution with Perfect Advice}

In the interest of further expanding our foundational understanding of contention 
resolution with predictions,
in Section~\ref{sec:perfect} we consider contention resolution with so-called
{\em perfect advice}.
In more detail, we fix parameter $b$ on the maximum number of bits of information
provided to the algorithm as advice or predictions about the network.
We assume an advice function $f_A$ that returns the
same advice $f_A(P)$ of size $b$ bits to every node in the  set $P$ of participating players.
We seek to understand tight bounds on the speed-up possible given $b$ bits of perfect advice.
 Tight bounds here characterize the best possible
improvements possible for a given quantity of advice---lower bounds that can 
later inform the study of more specific prediction models.

\begin{table}
\centering
\renewcommand{\arraystretch}{1.5}
\begin{tabular}{ | c | c | c |}
    \hline
  & Deterministic & Randomized\\ 
 \hline
 No Collision Detection & $\Theta\big(n^{1-b/\log{n}}\big)$ & $\Theta(\log{n}/2^b)$ \\  
 \hline
  Collision Detection & $\Theta(\log{n}-b)$ & $\Theta(\log{\log{n}}-b)$ \\
 \hline
\end{tabular}
\renewcommand{\arraystretch}{1}

\caption{\label{tab:adviceresults} Here we summarize our results for deterministic and randomized contention resolution given $b$ bits
of perfect advice. 
The randomized results are expressed with respect to expected round complexity. All bounds presented are tight.}
\end{table}
We prove this question amenable to analysis by providing tight asymptotic bounds on 
achievable speed-up, with respect to parameter $b$, for both deterministic and randomized algorithms,
with and without collision detection.
Our lower bound results for deterministic algorithms leverage existing lower
bounds on a combinatorial
object called a {\em strongly selective family}~\cite{clementi2003distributed},
whereas our randomized lower bounds leverage a reduction from the
well-understood case where $b=0$.

\subsection{Other Related Work}

There is a long history of studying algorithms for contention resolution in wireless networks, from Chlamtac and Kutten in 1985~\cite{CK85} and Willard~\cite{willard1986log} to the present, including tight upper and lower bounds for the problem in a variety of settings (e.g., with and without collision detection). Several recent directions have influenced this paper.

First, there have been a series of papers recently looking at the minimum number of bits of information per node needed to solve broadcast in a multihop wireless network. In 2006, Fraigniaud, Ilcinkas, and Pelc~\cite{FIP10} proposed measuring the amount of \emph{advice} needed to solve various a distributed problems, and later 
Ilcinkas, Kowalski, and Pelc~\cite{IKP10} applied this to the problem of broadcast in a radio network. Further results on this were given by Ellen, Gorain, Miller, and Pelc~\cite{EGMP19}, Ellen and Gilbert~\cite{EG20}, and Bu, Potop-Butucaru, and Rabie~\cite{BPR20}. These results differed from our results on perfect advice in Section~\ref{sec:perfect} in that they focused on multihop networks, giving a small number of distinct bits of advice to each node. Thus the total number of bits of advice  (e.g., $\Theta(n)$) was significantly larger than in this paper.

Second, there has been much interest lately on how well algorithms can do with additional advice. The general idea is that machine learning and other prediction techniques may be able to \emph{sometimes} provide good advice that allows algorithms to run faster; the challenge lies in ensuring that they continue to perform well when the advice is faulty. Much of this direction began with Mitzenmacher's work on the Learned Bloom Filter in 2018~\cite{M18}, and has continued with a variety of related online algorithms such as work by Kumar, Purohit and Svitkina's~\cite{KPS18} and Lattanzi, Lavastida, Moseley and Vassilvitskii~\cite{LLMV20} on scheduling. Our paper does not focus on machine learning per se, but does look at how extra information can help a distributed algorithm, and examines the impact on performance of faulty advice.

Third, one of the goals of this paper is to construct a connection between information theory and distributed algorithms, and there have been several recent papers that have used information theory and coding theory techniques to better understand radio network algorithms. Dufoulon, Burman, and Beauquier~\cite{DBB20} recently showed how to use techniques from coding theory to communicate efficiently in \emph{beeping} networks; Ashkenazi, Gelles, and Leshem~\cite{AGL20} and Efremenko, Kol, and Saxena~\cite{EKS20} independently showed how to use techniques from coding theory to overcome noise in beeping networks. This series of recent works, along with our own, leads us to believe that there are powerful connections to information theory here.

\section{Contention Resolution with Network Size Predictions}
\label{sec:networksize}

Here we investigate an obvious question. Assume over time you have learned a distribution
that characterizes the likelihood of various network sizes for the instance
of contention resolution you are about to execute. To what degree can you leverage
these predictions to outperform worst-case bounds?
We study this question from both the upper and lower bound perspective,
considering both the collision detection and no collision assumptions.
As elaborated below, in doing so we establish a novel connection
between contention resolution and core information theorey results
regarding coding on noiseless channels.

\subsection{Uniform Algorithms}

We focus on  {\em uniform} contention resolution algorithms, in which participants rely on predetermined probabilities.
With no collision detection, a uniform algorithm can be interpreted as a sequence of probabilities, $p_1,p_2,p_3,\ldots$,
such that in round $i$, all participants broadcast with probability $p_i$.
With collision detection,
a uniform algorithm can be interpreted as a function $f$ from collision histories
to broadcast probabilities. 
For a given round $r$, let the binary string $B = b_1,b_2,\ldots,b_{r-1}$ describe
the collision history through the first $r-1$ rounds of the given execution (i.e., $b_i = 0$ indicates there was no collision in round $i$, while $b_i=1$ indicates there was).
In round $r$, all players transmit with the same uniform probability $f(B)$. 
Uniformity is a common assumption in the study of contention resolution.
Many previous lower bounds assume this property; e.g.,~\cite{willard1986log,jurdzinski2002probabilistic,cover2006}.

\subsection{Preliminaries}
\label{sec:networksize:prelim}
Fix a network size $n$.
Let $X$ be the discrete random variable that determines the number of participants in each instance of contention resolution.
In more detail, $X$ takes its values from $1,2,\ldots,n$, that
occur with probabilities $p_1,p_2,\ldots,p_n$, respectively.
In our setting, the size $k$ of the participant set is determined by $X$, leaving
the adversary only to determine {\em which} $k$ nodes participate.
Notice, however, when considering uniform algorithms the identity of the participants
is not consequential as transmission behavior is determined exclusively by the algorithm and the collision history.

In our algorithms, we sometimes talk about this random variable $X$, or perhaps
an estimate $Y$ of this random variable, being provided {\em as input}
to the algorithm. This is shorthand for the more accurate statement that the
underlying {\em distribution} defining the variable is provided as input.
That is, the algorithm is given for each network size, a prediction
of the probability that the network size is drawn.

In our analysis, we will find it useful to reference a {\em condensed} version of $X$,
we call $c(X)$, which aggregates the probability mass spread over $n$ possible
network sizes into $\log{n}$ geometric size ranges.
(Assume all logs are base $2$.)
To formalize this latter definition, 
let $L(n) = \{1,2,3,\ldots,\lceil \log{n} \rceil\}$.
We associate each range $i\in L(n)$ with the values in the interval $(2^{i-1}, 2^i]$.
That is, $i=1$ is associated with just the value $2$,
$i=2$ is associated with the range $3$ to $4$,
$i=3$ is associated with $5$ to $8$, and so on.
Implicit in this definition is the assumption that the network size is always of size at least $2$ as there is no contention to resolve in a network of size less than $2$.\footnote{This
assumption can hold without loss of generality as all algorithms can eliminate the $n=1$ 
possibility in an additional early round in which all players transmit with probability $1$.}

The random variable $c(X)$ takes its values from $L(n)$. For each $i\in L(N)$, 
let $q_i = \Pr\left(c(X) = i\right)$, defined as follows:

\[q_i = \sum_{2^{i-1} < j \leq 2^i} p_j\]

The analysis that follows  makes use of the following version of Jensen's inequality
applied to concave functions:

\begin{theorem}[Jensen's Inequality]
If $p_1,\ldots,p_n$ are real numbers where $p_i > 0$ for all $i\in[n]$ and $\sum_{i\in[n]}p_i=1$,
and $f$ is a real continuous function that is concave, then:
$f\left(\sum_{i=1}^n p_i x_i\right) \geq \sum_{i=1}^n p_i f(x_i)$.
\label{thm:jensens}
\end{theorem}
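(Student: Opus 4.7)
The plan is to prove Jensen's inequality for concave $f$ by induction on $n$, relying only on the defining two-point concavity inequality. The base case $n=1$ is a trivial identity, and $n=2$ is literally the definition of concavity: for $p_1,p_2>0$ with $p_1+p_2=1$ and any points $x_1,x_2$ in the domain, $f(p_1 x_1 + p_2 x_2)\geq p_1 f(x_1)+p_2 f(x_2)$. So these cases need no argument beyond invoking the hypothesis that $f$ is concave.

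For the inductive step, I would assume the inequality holds for any collection of $n-1$ strictly positive probabilities and prove it for $n$. The key manipulation is to peel off the last term and renormalize. Since $p_i>0$ for all $i$ and the $p_i$ sum to $1$, we have $p_n<1$ whenever $n\geq 2$, so $1-p_n>0$ and I may set $q_i=p_i/(1-p_n)$ for $i=1,\ldots,n-1$. The $q_i$ are strictly positive and sum to $1$, so they satisfy the hypotheses of the inductive statement on $n-1$ values. Then rewrite
\[
\sum_{i=1}^n p_i x_i \;=\; (1-p_n)\sum_{i=1}^{n-1} q_i x_i \;+\; p_n x_n,
\]
apply the two-point concavity inequality to split this convex combination apart, and then apply the inductive hypothesis to the inner weighted sum $\sum_{i<n} q_i x_i$. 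Regrouping the coefficients $(1-p_n)q_i = p_i$ yields exactly $\sum_{i=1}^n p_i f(x_i)$ as the lower bound on $f\bigl(\sum_i p_i x_i\bigr)$.

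There is no deep obstacle here; this is a classical argument and the only things to verify are the bookkeeping points that $1-p_n>0$ (so the renormalization is legitimate) and that the $q_i$ genuinely form a probability distribution on $n-1$ atoms (so the inductive hypothesis applies). Continuity of $f$ is not actually needed in the induction itself, but it is harmless to assume, and it guarantees that the values $f(x_i)$ and $f(\sum p_i x_i)$ are well-defined under mild domain assumptions. An alternative, slicker route would be to invoke the existence of a supporting affine majorant $L$ to $f$ at the point $\mu=\sum_i p_i x_i$, giving $\sum_i p_i f(x_i)\leq \sum_i p_i L(x_i)=L(\mu)=f(\mu)$; however, this requires appealing to the existence of supporting hyperplanes for concave functions, which is heavier machinery than the elementary induction and thus less suited to a self-contained preliminaries section.
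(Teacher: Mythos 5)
The paper does not prove this statement at all: Theorem~\ref{thm:jensens} is cited in the preliminaries as classical background (alongside the Source Coding Theorem) and used directly, without proof. Your induction argument is the standard, correct proof of finite Jensen for concave functions---the base case is the two-point definition of concavity, the inductive step peels off $p_n$, renormalizes the remaining weights by $1-p_n>0$, and applies the two-point inequality followed by the inductive hypothesis. The bookkeeping you flag ($1-p_n>0$ when $n\geq 2$, the renormalized $q_i$ forming a probability vector) is exactly what needs checking, and you handle it. Your side remark that continuity is not needed for the induction is also correct; the hypothesis in the theorem statement is stronger than necessary for this finite form. Since the paper offers no proof to compare against, there is nothing to reconcile: your argument would serve as a self-contained justification if one were wanted, and your alternative via a supporting affine majorant at $\mu=\sum_i p_i x_i$ is the other standard route (slicker but reliant on the existence of supporting lines for concave functions).
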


It also builds on the lower bound result from Shannon's famed Source Coding Theorem,
which concerns the efficiency of codes on noiseless channels:

\begin{theorem}[Source Code Theorem~\cite{shannon1948mathematical}]
Let $X$ be a random variable taking values in some finite alphabet $\Sigma$.
Let $f$ be a uniquely decodable code from this alphabet to $\{0,1\}$.
Let $S$ be the random variable that describes the length of codeword $f(X)$.
Let $H$ be the entropy function.
It follows: \[ H(X) \leq E(S)\]

\label{thm:shannon}
\end{theorem}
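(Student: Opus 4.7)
The plan is to establish $E(S) \geq H(X)$ by rewriting the difference $E(S) - H(X)$ as a Kullback--Leibler-style quantity and then showing it is nonnegative via two ingredients: Kraft's inequality (extended to uniquely decodable codes) and the version of Jensen's inequality already stated as Theorem~\ref{thm:jensens}. Let $\Sigma = \{\sigma_1, \ldots, \sigma_m\}$, let $p_i = \Pr(X = \sigma_i)$, and let $\ell_i$ be the length of the codeword $f(\sigma_i)$, so that $E(S) = \sum_i p_i \ell_i$.

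The main technical obstacle is the first step: proving Kraft's inequality in the strong form that applies to \emph{all} uniquely decodable codes, not merely to prefix codes, i.e., showing
\[ K := \sum_{i=1}^{m} 2^{-\ell_i} \leq 1. \]
For prefix codes this is immediate from a binary-tree argument, but for general uniquely decodable codes it requires the classical McMillan trick: consider $K^n$ for an arbitrary positive integer $n$, expand it as a sum over $m$-tuples of codewords and reindex by total concatenated length $t$, obtaining $K^n = \sum_{t} N(t)\, 2^{-t}$, where $N(t)$ is the number of $n$-tuples of codewords whose concatenation has length $t$. Unique decodability forces $N(t) \leq 2^{t}$ (distinct $n$-tuples must give distinct bit strings), so $K^n \leq n \ell_{\max}$, and taking $n \to \infty$ forces $K \leq 1$. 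This is the one step that cannot be skipped and where most of the real work sits.

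The second step is an algebraic rewrite of the target quantity:
\[ E(S) - H(X) = \sum_{i=1}^{m} p_i \ell_i - \sum_{i=1}^{m} p_i \log\!\left(\frac{1}{p_i}\right) = -\sum_{i=1}^{m} p_i \log\!\left(\frac{2^{-\ell_i}}{p_i}\right). \]
The third step applies Theorem~\ref{thm:jensens} to the concave function $\log$, using the weights $p_i$ and the values $2^{-\ell_i}/p_i$:
\[ \sum_{i=1}^{m} p_i \log\!\left(\frac{2^{-\ell_i}}{p_i}\right) \leq \log\!\left(\sum_{i=1}^{m} p_i \cdot \frac{2^{-\ell_i}}{p_i}\right) = \log\!\left(\sum_{i=1}^{m} 2^{-\ell_i}\right) \leq \log 1 = 0, \]
where the final inequality invokes Kraft's inequality from the first step. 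Combining with the identity from the second step gives $E(S) - H(X) \geq 0$, which is exactly the claim. The only delicate point beyond McMillan's argument is handling the possibility that some $p_i = 0$, which is resolved by the standard convention $0 \log 0 = 0$ and by restricting the sums to the support of $X$ before applying Jensen.
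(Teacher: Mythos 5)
The paper states this result as a citation to Shannon~\cite{shannon1948mathematical} and does not prove it; there is no in-paper proof to compare against. Your argument is the standard textbook proof: the Kraft--McMillan inequality $\sum_i 2^{-\ell_i}\le 1$ for uniquely decodable codes (via the $K^n$ counting trick), the rewrite $E(S)-H(X)=-\sum_i p_i\log\bigl(2^{-\ell_i}/p_i\bigr)$, and an application of Jensen's inequality for the concave $\log$ together with Kraft--McMillan to bound that sum by $\log 1=0$. All three steps are correct as you have laid them out, including the observation that McMillan's extension beyond prefix codes is the only nontrivial load-bearing step, and the remark about restricting to the support of $X$ before applying Jensen. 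Nothing is missing, though of course for a paper that merely cites the theorem, no proof was needed.
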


Our upper bound analysis considers the case in which the network size distribution $Y$
provided as input to our algorithms does not exactly match the actual distribution
$X$ from which the network size will be drawn.
Drawing from standard statistics,
we can use the Kullback-Leibler divergence between the two distributions,
denoted $D_{KL}(X\Vert Y)$, to quantify their differences.
We then leverage the following well-known information theory result
which bounds the decrease in coding performance, 
with respect to $D_{KL}(X\Vert Y)$, when you build an construct an optimal
code for $Y$ that you then combine with symbol source $X$ (for more on Kullback-Leibler divergence
and the coding bound see the excellent review in~\cite{cover2006}):

\begin{theorem} 
\label{thm:codelength}
Let $X$ and $Y$ be random variables taking values in some finite alphabet $\Sigma$.
Let $f$ be an optimal, uniquely decodable code from $Y$  to $\{0,1\}$.
Let $S$ be the random variable that describes the length of codeword $f(X)$.
Let $H$ be the entropy function and $D_{KL}$ be the Kullback-Leibler divergence of two distributions.
It follows: \[H(X) + D_{KL}(X\Vert Y)\leq  E(S) \leq H(X) + D_{KL}(X\Vert Y) + 1\]

\end{theorem}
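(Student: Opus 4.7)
The plan is to identify the ``optimal, uniquely decodable code from $Y$'' with the Shannon code, whose lengths are $\ell(y) = \lceil \log(1/q(y)) \rceil$ for $q(y) = \Pr(Y = y)$, and then derive both halves of the sandwich via pointwise inequalities on $\ell$. Kraft's inequality verifies that this is a valid uniquely decodable (prefix) code, since $\sum_y 2^{-\ell(y)} \leq \sum_y q(y) = 1$. Its expected length under $Y$ lies in $[H(Y), H(Y)+1)$, so it matches the Shannon source coding lower bound (Theorem~\ref{thm:shannon}) to within one bit---this is the sense in which the code is ``optimal'' among prefix codes for $Y$.

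From here the argument is a short calculation on each side. For the lower bound I use $\ell(y) \geq \log(1/q(y))$, multiply by $p(x) = \Pr(X=x)$, sum over $x$, and decompose via the standard entropy/KL identity:
\[
E(S) = \sum_x p(x)\,\ell(x) \geq \sum_x p(x)\log\frac{1}{q(x)} = \sum_x p(x)\log\frac{1}{p(x)} + \sum_x p(x)\log\frac{p(x)}{q(x)} = H(X) + D_{KL}(X\Vert Y).
\]
The upper bound follows identically from the companion ceiling inequality $\ell(y) \leq \log(1/q(y)) + 1$, which contributes an additional $\sum_x p(x) = 1$ on the right-hand side and yields $E(S) \leq H(X) + D_{KL}(X\Vert Y) + 1$.

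The main obstacle is conceptual rather than computational: pinning down what ``optimal'' is supposed to mean in the hypothesis. The pointwise bounds above hold for the Shannon construction, but fail for a minimum-expected-length (Huffman) code, where a symbol $y$ with $q(y)$ close to $1$ but $p(y)$ near $0$ can receive a length-$1$ codeword and drive $E(S)$ below $H(X) + D_{KL}(X\Vert Y)$. I would therefore explicitly fix the Shannon code as the canonical ``optimal uniquely decodable code'' intended here, appealing to the treatment in Cover and Thomas (the reference already cited in the preliminaries), where this sandwich is proven under exactly this convention. Once that reading is settled, the remainder of the proof is a straightforward rearrangement of the definitions of entropy and KL divergence.
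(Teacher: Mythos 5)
Your proof is correct, and it is exactly the standard Cover--Thomas ``wrong code'' argument: Shannon lengths $\ell(y)=\lceil\log(1/q(y))\rceil$ for the $Y$-distribution, Kraft to certify unique decodability, and the pointwise bounds $\log(1/q(x))\leq\ell(x)\leq\log(1/q(x))+1$ averaged under $X$ and split via $\log(1/q)=\log(1/p)+\log(p/q)$ to get $H(X)+D_{KL}(X\Vert Y)\leq E(S)\leq H(X)+D_{KL}(X\Vert Y)+1$. The paper does not prove this theorem at all---it states it as a known result and points to the review in \cite{cover2006}---so you have supplied precisely the argument the paper outsources. Your caveat about the word ``optimal'' is also well taken and not pedantic: for a minimum-expected-length (Huffman) code built for $Y$, a symbol can receive a codeword shorter than $\log(1/q)$, and under a mismatched $X$ the lower bound $H(X)+D_{KL}(X\Vert Y)\leq E(S)$ can genuinely fail, so reading ``optimal'' as the Shannon code for $Y$ (optimal to within one bit, as in the cited source) is the correct interpretation and the one under which the sandwich holds.
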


\noindent Lastly, we note note that 
$D_{KL}(X\Vert X)=0$ for any random variable $X$.

\subsection{Lower Bound for No Collision Detection}
\label{sec:predictions:lower:nocd}

Our goal is to prove the following lower bound that connects contention resolution with a known
network size distribution to the entropy of the condensed version of that distribution:

\begin{theorem}
Fix a uniform algorithm $A$ for a network of size $n$.
Let $t_X(n)$ be the expected round complexity for $A$ to solve contention resolution
on a channel with no collision detection and the number of participants determined by random variable $X$.
It follows: $t_X(n) = \Omega\left( \frac{2^{H(c(X))}}{\log\log{n}}\right)$.
\label{thm:lower:nocd}
\end{theorem}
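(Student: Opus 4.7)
The approach is to reduce contention resolution to an intermediate ``range-finding'' game and then convert a range-finding solution into a source code for $c(X)$, invoking Shannon's source-coding theorem to control the expected round complexity.

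First, I would define, for each $s \in L(n)$, a ``good range'' $R_s \subseteq (0,1]$ of transmission probabilities---essentially the constant-width window $[c_1 \cdot 2^{-s}, c_2 \cdot 2^{-s}]$ in which a single broadcast round on a network of size $k \in (2^{s-1}, 2^s]$ has at least a fixed constant probability that exactly one player transmits. A standard calculation with $k p (1-p)^{k-1}$ shows that any round whose scheduled probability lies outside $R_s$ contributes only a $o(1)$ fraction of the success probability of a round inside $R_s$. Consequently, if $t_s$ is the expected number of rounds for $A$ to succeed when the network size falls in the $s$-th range, then the expected index $\rho_s$ of the first round whose probability lies in $R_s$ satisfies $\rho_s = O(t_s)$, because the algorithm cannot accumulate constant success probability---let alone terminate in expectation---without reaching such a round. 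By linearity, the expectation of $\rho_{c(X)}$ over $c(X)$ is $O(t_X(n))$.

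Second, I would use the publicly known probability sequence $p_1, p_2, \ldots$ of $A$ to build a uniquely decodable code $\phi: L(n) \to \{0,1\}^*$ by letting $\phi(s)$ be a self-delimiting (Elias-style) encoding of $\rho_s$, with a distinguished escape prefix reserved for the (negligible-mass, separately handled) event that no good round ever occurs. Unique decodability follows because the decoder knows $A$ and can recompute $p_{\rho_s}$ to identify which $R_s$ it belongs to. The codeword length satisfies $|\phi(s)| \le \log \rho_s + O(\log\log \rho_s) + O(1)$, and since we can truncate $A$ at round $n^2$ without loss of generality, $\log\log\rho_s \le \log\log n + O(1)$.

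Third, Theorem~\ref{thm:shannon} applied to $c(X)$ and $\phi$ gives
\[
H(c(X)) \;\le\; E\!\left[|\phi(c(X))|\right] \;\le\; E\!\left[\log \rho_{c(X)}\right] + O(\log\log n).
\]
Jensen's inequality (Theorem~\ref{thm:jensens}) applied to the concave function $\log$ yields $E[\log \rho_{c(X)}] \le \log E[\rho_{c(X)}]$; combining with $E[\rho_{c(X)}] = O(t_X(n))$ from Step~1 produces $\log t_X(n) \ge H(c(X)) - O(\log\log n)$, and exponentiating gives a lower bound of the claimed shape.

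The principal obstacle I expect is sharpening the encoding so the denominator is exactly $\log\log n$ rather than a cruder polylog in $n$. A generic Elias code contributes an additive $O(\log\log n)$ slack inside the Shannon inequality, and exponentiating $H(c(X)) - O(\log\log n)$ costs a polylog factor---looser than the theorem asserts. Tightening this calls for either a custom prefix code exploiting an a priori tail bound on $\rho_s$ (for instance, truncating $A$ at $O(t_X(n) \cdot \log\log n)$ rounds via Markov's inequality and then encoding $\rho_s$ within that window using only $\log \rho_s + O(\log\log\log n)$ bits), or a two-level encoding that absorbs the $\log\log$ overhead into a single additive $\log\log\log n$ term. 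A minor bookkeeping detail is that symbols $s$ whose $R_s$ is never reached by $A$ must still receive a codeword, but their total probability mass is negligible provided $A$ is correct, so they contribute only a constant additive loss.
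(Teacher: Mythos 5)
Your high-level plan---reduce to hitting a ``good'' transmission probability, use that hitting time to build a prefix code for $c(X)$, then apply Shannon's Source Coding Theorem together with Jensen's inequality---is exactly the structure the paper uses, via an intermediate range-finding game. But Step~1 contains a genuine flaw that the paper avoids, and locating it also explains where the $\log\log n$ denominator actually comes from.

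The assertion that probabilities outside a constant-width window $R_s = [c_1 2^{-s}, c_2 2^{-s}]$ contribute ``only a $o(1)$ fraction'' of the within-window success probability is false. For $k \approx 2^s$ and $p = c\cdot 2^{-s}$ one has $kp(1-p)^{k-1} \approx c\,e^{-c}$, a fixed positive constant for any fixed constant $c$---including values of $c$ just outside $[c_1,c_2]$. Hence a uniform schedule that repeats $p_r = c\cdot 2^{-s}$ indefinitely for some $c \notin [c_1,c_2]$ terminates in $O(1)$ expected rounds on sizes near $2^s$ while \emph{never} testing a value in $R_s$; then $\rho_s = \infty$ while $t_s = O(1)$, so $\rho_s = O(t_s)$ fails and your escape prefix, far from covering a negligible-mass event, would have to absorb all the mass for range $s$. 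The paper instead uses a factor-$\Theta(\log n)$ window around $1/k$ (Lemma~\ref{lem:close}), calibrated so that a single round outside that window succeeds with probability strictly less than $1/(2\log n)$; a union bound over the first $\mu_i \leq \log n$ rounds then forces a near-match to appear early, and a cyclic interleave over all of $L(n)$ in the constructed sequence caps every hitting index by $2\log n$, which is what disposes of the ``range never reached'' case you relegate to an escape symbol. Concretely, the transmitted offset $d$ must therefore be allowed to be as large as $\Theta(\log\log n)$ in index space rather than $O(1)$; encoding it costs $\Theta(\log\log\log n)$ bits inside the Shannon inequality, and exponentiating is exactly what produces the $1/\log\log n$ factor in the theorem. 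It is not Elias-coding slack, as you conjectured---widening the window is mandatory for the hitting-time reduction to hold at all. (A minor aside: for a uniform algorithm the schedule $p_1,p_2,\ldots$ is deterministic, so each $\rho_s$ is a number, not a random variable; the only expectation in play is the average over $c(X)$.)
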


Our proof strategy deploys two steps. We begin by defining a more abstract combinatorial-style 
problem called {\em range finding}, which we can more directly and clearly connect to entropy.
We then show how to transform a contention resolution solution into range finding solution
with a related time complexity.

\paragraph{Range Finding.}
The range finding problem is parameterized
with a network size $n$ and range expressed as a function $f(n)$ of this size.
A range finding strategy can take the form of a sequence of values from $L(n)$,
or a binary tree with its nodes labelled with values from $L(n)$.
Here we define the version defined with respect to a sequence, 
as this is the version needed for our proof of the above theorem.
The binary tree variation will be used when we later consider contention
resolution with collision detection.

We say a sequence $S=v_1,v_2,\ldots,v_k$ {\em solves} the
 $(n, f(n))$-range finding problem in $t$ steps for a given target $v\in L(n)$,
 if $S[t]$ is the first position in $S$ such that $|S[t] - v| \leq f(n)$, where $S[t]=v_t$ is the $t$th element of $S$.
 To handle probabilistic selections of targets from a known distribution,
fix some random variable $Y$ that takes values from $L(n)$.
For each $i\in L(n)$, let $p'_i = \Pr(Y = i)$.
We say $S$ solves $(n,f(n))$-range finding in expected time $t$ with respect to $Y$,
if $t$ is the expected step at which $A$ solves the problem when the target value
is determined by $Y$.

\paragraph{Bounding Sequence Range Finding Using Entropy.}
Assume $S$ is a sequence that solves $(n, f(n))$-range finding in expected time $T$ with respect
to some distribution $Y$ over $L(n)$.
We can use $S$ to design a code for  source $Y$ with an efficiency determined by $f(n)$.
We leverage this connection to prove the following about the connection between
range finding and entropy for a range $f(n) = O(\log\log{n})$ that will prove
useful for our subsequent
attempts to connect contention resolution to range finding:

\begin{lemma}
Let $S$ be a sequence that solves $(n,\alpha\log\log{n})$-range finding for some constant $\alpha\geq 1$ and network size $n>1$.
Assume that the range is determined by random variable $Y$.
Let $Z$ be the random variable describing the complexity of $S$.
It follows: 

\[ E(Z) = \Omega\Big(2^{H(Y)}/(\alpha\log\log{n})\Big), \]

\noindent where $H$ is the entropy function.
\label{lem:code:nocd}
\end{lemma}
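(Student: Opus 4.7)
The plan is to use the range-finding sequence $S$ to build a uniquely decodable binary code for the source $Y$, and then to apply the Source Code Theorem (Theorem~\ref{thm:shannon}) to convert the resulting expected code length into the claimed lower bound on $E(Z)$. The intuition is that the first step $T_v$ at which $S$ finds a target $v$ carries almost all the information needed to identify $v$, modulo an ambiguity over at most $2\alpha\log\log n + 1$ candidate targets covered by that single step.

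For each $v$ in the support of $Y$, I would let $T_v$ be the first index $t$ at which $|S[t] - v| \le \alpha \log \log n$. The range-finding hypothesis guarantees $T_v$ is finite, and by definition the induced random variable $T_Y$ satisfies $E[T_Y] = E(Z)$. I would then construct the codeword $c(v)$ as a concatenation of two pieces: (a) a prefix-free encoding of the integer $T_v$ obtained from a Shannon--Fano code for the distribution of $T_Y$, and (b) a fixed-length block of $\lceil \log(2\alpha \log \log n + 1)\rceil$ bits encoding the signed offset $v - S[T_v]$. The resulting code is uniquely decodable: the prefix-free first piece recovers $T_v$ unambiguously, after which the decoder reads exactly $\lceil \log(2\alpha \log \log n + 1)\rceil$ further bits and adds the offset to $S[T_v]$ to obtain $v$.

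Next I would bound $E[|c(Y)|]$. Piece (a) has expected length at most $H(T_Y) + 1$ by the Shannon--Fano construction applied to $T_Y$. To control $H(T_Y)$ in terms of $E(Z)$, I would invoke the classical fact that among all distributions on the positive integers with a fixed mean $\mu$, the geometric distribution is entropy-maximizing, giving $H(T_Y) \le \log(e \cdot E[T_Y]) = \log E(Z) + O(1)$. Piece (b) contributes $\lceil \log(2\alpha \log \log n + 1)\rceil$ bits deterministically. Summing and then applying Theorem~\ref{thm:shannon} to $c$ and $Y$ yields $H(Y) \le E[|c(Y)|] \le \log E(Z) + \log(2\alpha \log \log n + 1) + O(1)$, and exponentiating and rearranging produces $E(Z) = \Omega\big(2^{H(Y)}/(\alpha \log \log n)\big)$.

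The main obstacle is keeping the overhead in the encoding of $T_Y$ to an additive $O(1)$. A generic universal integer code such as Elias~$\delta$ pays $\Theta(\log\log E(Z))$ bits for self-delimitation, which would translate into an extra polyloglog factor in the denominator of the final bound. Relying instead on the max-entropy property of the geometric distribution is what keeps the encoding of $T_Y$ within $\log E(Z) + O(1)$ bits of expected length, and thereby matches the promised $\alpha \log \log n$ factor exactly rather than degrading it by polyloglog terms coming from the step-index encoding.
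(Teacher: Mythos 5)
Your proof is correct and shares the paper's blueprint --- encode the first step index at which $S$ hits the target plus a short signed offset, then invoke the Source Code Theorem on the resulting code for $Y$ --- but you establish the crucial $\log E(Z)+O(1)$ bound on the expected length of the step-index part by a genuinely different route. The paper bounds the $r$-part of its codeword $(r,d)$ by $\log r$ bits and then passes from $E[\log Z]$ to $\log E(Z)$ via Jensen's inequality applied to the concave logarithm. You instead build an explicitly prefix-free code for $T_Y$ via Shannon--Fano (expected length at most $H(T_Y)+1$) and bound $H(T_Y)\le\log\bigl(e\,E[T_Y]\bigr)$ using the fact that the geometric law maximizes entropy among positive-integer distributions with a given mean. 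What your version buys is rigor on the unique-decodability front: a raw $\log r$-bit encoding of the step index is not self-delimiting, and the generic universal integer codes that are pay an iterated-log overhead which, as you correctly observe, would contaminate the $\alpha\log\log n$ factor; Shannon--Fano gives prefix-freeness at an $O(1)$ additive cost. The Jensen argument the paper uses is shorter and reaches the same intermediate inequality $H(Y)\le\log E(Z)+\log(\alpha\log\log n)+O(1)$, so the two approaches are interchangeable in substance; yours trades one application of Jensen for two standard facts (Shannon--Fano optimality and the geometric max-entropy bound) and in exchange eliminates the implicit hand-wave over how the step index is delimited in the paper's $(r,d)$ pair.
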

\begin{proof}
Fix some $n$, $\alpha$, $S$ and $Y$ as specified by the theorem statement.
We can use $S$ to design a code for transmitting symbols from $L(n)$ over a noiseless channel as follows:

\begin{itemize}
    \item Initialize the sender and receiver with sequence $S$.
    \item To communicate a value $x\in L(n)$,
    the sender transmits $(r,d)$ to the receiver, where: $r$ is the first round
    such that $S[r]$is within $\alpha\log\log{n}$ of $x$ (i.e., the first round
    to solve $(\alpha,n)$-range finding for $x$;
    and $d= x-v_r$ is the distance of $S[r]$ from $x$.
    
    \item The receiver can then locally calculate $x = S[r]+ d$.
\end{itemize}

Let us call the above scheme {\em target-distance} coding.
We now relate the expected code length with this scheme with 
the expected complexity of range finding with $A$.
Consider the performance of the target-distance coding scheme based on these values.
For a given $x\in L(n)$,
that is solved by $S$ by round $r$,
the code length of this scheme is upper bounded by $\log{r} + \log{(\alpha\log\log{n})}$
The expected code length, therefore,
is upper bounded as:

\[ E(\log{Z}) + \lceil \log{(\alpha\log\log{n})}\rceil + 1,\]

where the additional bit indicates the sign of $d$.
Noting the concavity of the log function
we can apply Jensen's Inequality (Theorem~\ref{thm:jensens}) to further refine this bound as follows:

\[ E(\log{Z}) + \lceil \log{(\alpha\log\log{n})}\rceil + 1 \leq \log{E(Z)} + \lceil \log{(\alpha\log\log{n})}\rceil + 1.\]

The Source Code Theorem (Theorem~\ref{thm:shannon}) tells us that the average code length of this scheme
is lower bounded by the entropy of $Y$.
It follows that $H(Y) \leq \log{E(Z)} +\lceil \log{(\alpha\log\log{n})}\rceil + 1$, which implies: 

\begin{eqnarray*}
2^{H(Y)} &\leq& 2^{ \log{E(Z)} + \lceil \log{(\alpha\log\log{n})}\rceil + 1 } \Rightarrow \\
    2^{H(Y)} &\leq & 2^{ \log{E(Z)}}2^{\lceil \log{(\alpha\log\log{n})}\rceil + 1 } \Rightarrow \\
    2^{H(Y)} & \leq & E(Z)\cdot 4 \alpha\log\log{n} \Rightarrow \\
    E(Z) & \geq & \frac{2^{H(Y)}}{4\alpha\log\log{n}}= \Omega\Bigg(\frac{2^{H(Y)}}{\alpha\log\log{n}}\Bigg),
\end{eqnarray*}

\noindent as claimed by the lemma.
\end{proof}

\paragraph{Solving Range Finding with Contention Resolution.}
Here we transform a solution to contention resolution to a sequence that solves
range finding with a similar expected complexity. Contention resolution is a more general
problem than range finding, so care is needed to tame its possible unexpected behaviors.
We begin by defining an algorithmic process for transforming a uniform algorithm $A$
into a range finding sequence $S_A$. We then analyze the properties of $S_A$.

\bigskip

\begin{algorithm}
\SetAlgoLined
\KwIn{Uniform contention resolution algorithm $A=p_1,p_2,\ldots,p_z$}
\KwOut{Range finding sequence $S_A$}
 
 $S_A \gets \emptyset$\;
 $j \gets 0$\;
 
 \For{$i\gets 1$ to $z$}{
    Append $\lceil \log{(1/A[i])} \rceil$ to end of $S_A$\;
    Append $2^j$ to end of $S_A$\;
    $j \gets j + 1$\;
    
    \If{$j > \lceil \log{n}\rceil$}{$j\gets 0$}
 
 }

\Return{$S_A$}
 \caption{RF-Construction}
\end{algorithm}

\bigskip

We now analyze the quality of the range finding solution produced by our
RF-construction algorithm. We begin a useful helper lemma that formalizes
the intuitive notion that a contention resolution algorithm is unlikely
to succeed if its probability is too far form the optimal value
for the participant count.

\begin{lemma}
Assume in a given round of a uniform contention resolution algorithm
that the $1<k\leq n$ participants (where $k\in\mathbb{Z})$ each decide to transmit with 
a probability $p$ such that 
$p < \frac{1}{k\beta\log{n}}$ or $p> \frac{\beta\log{n}}{k}$,
for some sufficiently large constant $\beta \geq 1$.
It follows that the probability exactly one participant transmits
is strictly less than: $\frac{1}{2\log{n}}$.
\label{lem:close}
\end{lemma}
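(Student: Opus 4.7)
The plan is to start from the exact formula for the probability that exactly one of $k$ participants transmits when each tries independently with probability $p$:
\[ P_1(k,p) \;=\; kp(1-p)^{k-1}, \]
and bound this quantity separately in the two regimes for $p$ described in the lemma statement. The choice of the constant $\beta$ will be made at the end so that both bounds beat $\frac{1}{2\log n}$ simultaneously.

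First I would handle the small-probability case $p < \frac{1}{k\beta\log n}$. Since $(1-p)^{k-1} \leq 1$, we immediately get $P_1(k,p) \leq kp < \frac{1}{\beta \log n}$, which is strictly less than $\frac{1}{2\log n}$ as soon as $\beta > 2$. This direction requires no calculation beyond the trivial bound.

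Next I would handle the large-probability case $p > \frac{\beta \log n}{k}$. Here I would use the standard inequality $1-p \leq e^{-p}$ to get $(1-p)^{k-1} \leq e^{-p(k-1)}$, and exploit $k \geq 2$ (which the lemma hypothesis gives us) to say $k-1 \geq k/2$, so that $p(k-1) \geq pk/2 > (\beta \log n)/2$. This yields $P_1(k,p) \leq k \cdot e^{-(\beta \log n)/2} \leq n \cdot n^{-\beta/(2\ln 2)}$, which decays as a sufficiently negative power of $n$ once $\beta$ is taken to be a large enough absolute constant (comfortably swamping the $1/(2\log n)$ threshold for all $n \geq 2$).

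The main obstacle — really the only delicate point — is the interaction between the large-$p$ regime and the constraint $p \leq 1$: the hypothesis $p > \beta \log n / k$ is vacuous unless $k > \beta \log n$, so one needs to note that the lemma's conclusion is trivial in the vacuous case and only exercise the exponential-decay argument when $k$ is large enough to make $\beta \log n / k < 1$. Combining the two regimes and selecting $\beta$ large enough (any fixed $\beta \geq 4$ suffices for $n$ at least a small constant) gives the claimed bound. I would end with the remark that the same argument in fact yields a bound of $O(1/\mathrm{poly}(\log n))$ rather than merely $1/(2\log n)$, which may be useful later when this lemma is applied to tie contention resolution to the range-finding sequence with slack $\alpha\log\log n$.
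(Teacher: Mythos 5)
Your proposal is correct and follows essentially the same route as the paper: bound $kp(1-p)^{k-1}$ by $kp$ in the small-$p$ regime, and use $(1-p)^{k-1}\leq e^{-p(k-1)}$ together with $k-1\geq k/2$ to get polynomial-in-$n$ decay in the large-$p$ regime. Your explicit remark that the large-$p$ case is vacuous when $\beta\log n/k\geq 1$ is a clean way to sidestep a minor issue the paper glosses over (the base $1-\beta\log n/k$ going negative in its intermediate inequality), and you get a marginally better constant ($\beta\geq 4$ versus the paper's $\beta\geq 6$) by not rounding $e$ down to $2$, but the substance is identical.
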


\begin{proof}

If $k$ nodes transmit with probability $p$, the number of nodes which transmit is equal to the binomial distribution $B(k,p)$ and so the probability that a single node transmits is $\Pr(B(k,p)=1)=kp(1-p)^{k-1}$. Therefore, if $p<1/(k\beta\log{n})$ then $\Pr(B(k,p)=1)<(k/(k\beta\log{n}))(1-p)^{k-1}<1/(\beta\log{n})$, satisfying our lemma with any $\beta\geq2$.

\par

Secondly, if instead $p>\beta\log{n}/k$ and $\beta\geq 6$,
\begin{align*}
\Pr(B(k,p)=1)
&< kp \Bigg(1- \frac{\beta\log{n}}{k}\Bigg)^{k-1}\\
&=  kp \Bigg(1- \frac{\beta\log{n}}{k}\Bigg)^{(k/k)(k-1)}\\
&\leq \frac{kp}{e^{(\beta\log{n})(k-1)/k}}\\
&\leq \frac{kp}{e^{(\beta\log{n})/2}}
< \frac{kp}{2^{(\beta\log{n})/2}}\\
&= \frac{kp}{n^{\beta/2}}
<  \frac{k}{n^{\beta/2}}
< 1/n^2
\end{align*}

Where the last line is true for any $\beta\geq 6$ and $2\leq k\leq n$. Therefore, since $1/n^2<1/(2\log{n})$ for all $n\geq 2$ the lemma is satisfied for both cases when $\beta\geq 6$.

\end{proof}

We can now prove our primary lemma which connects the performance of a
contention resolution algorithm to the range finding solution it induces:

\begin{lemma}
Let $A$ be a uniform contention resolution algorithm defined for a network of size $n$
that solves the problem in $t_X(n)$ rounds in expectation when the network size is determined
by random variable $X$.
Let $S_A$ be the range finding sequence returned by RF-construction run on $A$.
There exists a constant $\alpha \geq 1$
such that $S_A$ solves $(n, \alpha\log\log{n})$-range finding in expected time no more than $2t_X(n)$
with respect to $c(X)$.
\label{lem:range:nocd}
\end{lemma}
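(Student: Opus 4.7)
The overall approach is to connect each round of $A$ to the two entries RF-Construction appends to $S_A$, using Lemma~\ref{lem:close} to argue that whenever $A$ has nontrivial success probability in a round, its transmission probability must be close to $1/k$ for the true network size $k$. Since the condensed index $c(k) = i$ where $k \in (2^{i-1},2^i]$ is essentially $\log k$, a probability near $1/k$ produces a value $\lceil\log(1/p)\rceil$ near $c(k)$, which is exactly what range-finding asks for. Note that $\rho(k)$, the step at which $S_A$ resolves $(n,\alpha\log\log n)$-range finding for target $c(k)$, is deterministic given $k$, while $\tau(k) := E[T \mid X=k]$ (the expected success round of $A$ at size $k$) carries all of the randomness; it therefore suffices to show $\rho(k) \leq O(\tau(k))$ pointwise and then average against the law of $X$.

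The plan is to fix $k$ and define $r^*(k)$ as the first round whose probability $p_{r^*(k)}$ lies in the ``good band'' $[1/(k\beta\log n),\,\beta\log n/k]$ guaranteed by Lemma~\ref{lem:close}. For this round, $|\log(1/p_{r^*(k)}) - \log k| \leq \log(\beta\log n)$, so $|\lceil\log(1/p_{r^*(k)})\rceil - c(k)| \leq \log\beta + \log\log n + O(1)$, which is at most $\alpha\log\log n$ for a suitable constant $\alpha$ (with small $n$ handled separately, since the statement requires only existence of such an $\alpha$). Since RF-Construction appends $\lceil\log(1/A[i])\rceil$ at odd step $2i-1$, this immediately gives $\rho(k) \leq 2r^*(k) - 1$.

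Next I would relate $r^*(k)$ to $\tau(k)$ in two regimes. When $r^*(k) \leq \lceil\log n\rceil$, every round before $r^*(k)$ has success probability strictly below $1/(2\log n)$ by Lemma~\ref{lem:close}, so a union bound yields $\Pr(T < r^*(k) \mid X=k) \leq (r^*(k)-1)/(2\log n) \leq 1/2$; hence $\tau(k) \geq r^*(k)/2$ and $\rho(k) \leq O(\tau(k))$. When $r^*(k) > \lceil\log n\rceil$, I would fall back on the exploration entries of $S_A$: because the loop counter cycles through all of $L(n)$ every $\lceil\log n\rceil+1$ rounds, at least one exploration entry equal to $c(k)$ (or within the tolerance) is written by step $O(\log n)$, giving $\rho(k) \leq O(\log n)$. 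The same union bound applied to just the first $\lceil\log n\rceil$ rounds establishes $\tau(k) = \Omega(\log n)$ in this regime, so again $\rho(k) \leq O(\tau(k))$. Combining both cases and taking expectation over $X$ produces $E[\rho(X)] \leq O(t_X(n))$; the precise constant $2$ stated in the lemma is then recovered by tightening the analysis of Lemma~\ref{lem:close} (choosing $\beta$ large so that the per-round failure bound is sharpened below $1/(2\log n)$) and absorbing residual slack into $\alpha$.

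The main obstacle I expect is the second regime, where $A$ might accumulate many low-probability rounds into a success time $\tau(k)$ that is much smaller than $r^*(k)$. The algorithm-induced half of $S_A$ cannot handle this on its own, so the proof critically relies on the cycling exploration entries providing unconditional coverage of $L(n)$ in $O(\log n)$ steps, paired with the matching lower bound $\tau(k) = \Omega(\log n)$ that follows from the slow accumulation of success probability mandated by Lemma~\ref{lem:close}. Getting the two constants (the one controlling ``how far $r^*(k)$ can drift past $\tau(k)$'' and the one controlling ``when the exploration cycle completes'') to agree is the delicate bookkeeping step.
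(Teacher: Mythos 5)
Your proof follows essentially the same route as the paper: use Lemma~\ref{lem:close} to show that rounds with a transmission probability far from $1/k$ succeed with probability below $1/(2\log n)$, conclude via a union bound that a ``good'' probability must appear early in $A$'s schedule (and that it therefore appears near that position in $S_A$ after accounting for the interleaving), handle the case where no good probability appears within $\Theta(\log n)$ rounds by falling back on the interleaved exploration entries that cycle through all of $L(n)$, and then average the resulting pointwise bound $\rho(k) = O(\tau(k))$ over the distribution of $X$. The only deviation is bookkeeping: you do the case split on $r^*(k)$ rather than on $\mu_i$ and naturally land on a constant of roughly $4$ rather than $2$ (the paper's stated constant of $2$ rests on a slightly informal Markov-style step and the extra factor is anyway absorbed into the $\Omega$ in Theorem~\ref{thm:lower:nocd}), so this discrepancy is immaterial.
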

\begin{proof}
The expected time complexity $t_X(n)$ of $A$ is calculated with respect
to both the probabilistic source of network size $X$ and the bits used by $A$.
Here we introduce the notation required to formalize and manipulate this expectation equation.
Let $\mathcal{S}$ be the sample space of possible random bits generated by the participants running $A$.
We can also enumerate the values in $\mathcal{S}$ as $s_1,s_2,\ldots,s_{z}$.
As a reminder, we have $p_k = \Pr(X = k)$.
Let us also introduce $p'_s$, for $s\in \mathcal{S}$, to be the probability that $s$ are the random bits
generated by participants in $A$.
For participant size $k\in [n]$ and bits $s\in \mathcal{S}$,
let $q(k,s)$ be the number of rounds until $A$ solves contention resolution with $k$ participants
using bits $s$.\footnote{A technicality: this understanding of contention resolution complexity
assumes that the ids of the participants do not matter, only their number. This is clearly the case
for {\em uniform} algorithms of the type studied here, in which participants broadcast according
to a fixed schedule and do not make use of their ids in determining their behavior. The other
technicality is that we do not assume shared.}
And finally, for each such $k$, let $\mu_k$ be the expected round complexity of $A$
when we fix the participant size to $k$.
We note:

\begin{eqnarray*}
t_X(n) & = & \sum_{k=1}^n \sum_{s\in \mathcal{S}} p_k \cdot p_s' \cdot q(k,s) \\
       & = & p_1\left(p_{s_1}'q(1,s_1) + p_{s_2}'q(1,s_2) + \ldots + p_{s_z}'q(1,s_z)\right) + \ldots + \\
       &   & p_n\left(p_{s_1}'q(n,s_1) + p_{s_2}'q(n,s_2) + \ldots + p_{s_z}'q(n,s_z)\right)\\
       & = & p_1 \mu_1 + p_2 \mu_2 + \ldots + p_n \mu_n
\end{eqnarray*}

We now connect the equation in this derivation to our calculation of the expected time complexity of $S_A$ with respect to $c(X)$.
To do so, we first consider a variation of range finding where instead of determining a range with $c(X)$,
we determine a size $k$ with $X$, and then consider the problem solved when $S_A$ arrives at a range sufficiently
close to the corresponding range $\lceil \log{k} \rceil$.
Let $z_k$, for size $k$, be the round where $S_A$ solves range finding for the range corresponding to $k$.
We can compute the expected complexity of this variation as:

\[ p_1z_1 + p_2z_2 + \ldots + p_nz_n\]

Notice, however, that for $i$ and $j$ corresponding to the same range, $z_i = z_j$,
meaning we can aggregate the probabilities associated with range in $L(n)$, 
and get exactly $E(Y)$, defined with respect to $c(X)$.
The above equation, in other words, is an elaborated form of $E(Y)$.
This elaboration is useful because it simplifies our connection of $E(Y)$ to $t_X(n)$,
the expected complexity of $A$, calculated above.

In particular, we will next argue that for each $i$, $z_i \leq 2\mu_i$.
To do so, we consider two cases for a given $\mu_i$:
\begin{itemize}
    \item {\em Case 1: $\mu_i \leq \log{n}$:} Here we deploy a key operational property
    of $A$: if its expected complexity for a given participant size is small, it must feature a good probability
    for that participant size early on.
    Formally, assume for contradiction that the first $\mu_i$ probabilities in $A$ fall
    outside the range from $1/(\beta\cdot \log{n}\cdot i)$ to $(\beta\log{n})/i$ specified by Lemma~\ref{lem:close}.
    By this lemma, the probability of success in each of these rounds is therefore strictly less than $1/(2\log{n})$.
    Because $\mu_i \leq \log{n}$,
    a union bound establishes that the probability that at least one of these rounds succeeds is strictly less than $1/2$.
    It would follow that the probability that $A$ succeeds in the first $\mu_i$ rounds with network size $i$ is less than
    $1/2$, contradicting the assumption that $\mu_i$ is the expected time complexity in this context.
    
    It follows then that there is a probability $p^*$ between the values of $1/(\beta\cdot \log{n} \cdot i)$ to $(\beta\log{n})/i$
    in the first $\mu_i$ rounds of $A$.
    In RF-construction, this probability becomes guess $x =\lceil \log{(1/p^*)} \rceil$.
    Therefore:
    
    \begin{eqnarray*}
    x &\geq & \lceil \log{(i/(\beta\log{n}))} \rceil\\
     &= & \lceil \log{i} - \log{(\beta\log{n})} \rceil\\
     &= & \lceil \log{i} - (\log\log{n} + \log{\beta}) \rceil\\
     &>& \lceil \log{i} - \log{\beta}\log\log{n} \rceil.
    \end{eqnarray*}

   Bounding the other direction, and apply a similar derivation, we get:
   \[ x \leq \lceil \log{(\beta \cdot \log{n} \cdot i)} \rceil \leq \lceil \log{i} + \log{\beta}\log\log{n} \rceil.\]
    
   It follows that this for $\alpha \geq \log{(\beta)}$, the slot in $S_A$ corresponding to $p^*$ solves the $(n, \alpha\log\log{n})$-range finding.
   Because the construction of $S_A$ interleaves values between those corresponding to the probabilities in $A$,
   the position of $p^*$ in $S_A$ could be up to a factor of $2$ larger than its position in $A$.
   Therefore, it shows up within the first $2\mu_i$ positions in $S_A$, satisfying our claim.
   
   \item {\em Case 2: $\mu_i > \log{n}$:} This is the easier case. By the construction of $S_A$,
   during the first $2\log{n}$ rounds we interleave values corresponding to all ranges.
   Therefore, by definition, $z_i \leq 2\log{n} \leq 2\mu_i$, as needed.
   
\end{itemize}

Pulling together the pieces, we have  established: 

\begin{eqnarray*}
E(Y) & = & p_1z_1 + p_2z_2 + \ldots  +p_nz_n\\
    & \leq & p_1(2\mu_1) + p_2(2\mu_2) + \ldots + p_n(2\mu_n)\\
    & = & 2\left( p_1\mu_1 + p_2\mu_2 + \ldots + p_n\mu_n  \right)\\
    & = & 2t_X(n),
\end{eqnarray*}

 as claimed by the lemma.
\end{proof}

\paragraph{Pulling Together the Pieces.}
Fix some uniform contention resolution algorithm $A$ that solves contention resolution in expected time $t_X(n)$
when run in a network of size $n$ with no collision detection and a participant size determined by $X$.
By Lemma~\ref{lem:range:nocd},
there exists a constant $\alpha \geq 1$,
such that the range finding sequence $S_A$ constructed by applying RF-construction on $A$,
solves $(n, \alpha \log\log{n})$-ranging finding in expected time $T \leq 2t_X(n)$ when ranges are drawn from $c(X)$.

Applying Lemma~\ref{lem:code:nocd} further tells us $T \geq 2^{H(c(X))}/(\alpha\log\log{n})$.
It then follows that $2t_X(n) \geq 2^{H(c(X))}/(\alpha\log\log{n}) \Rightarrow t_X(n) =$ $\Omega\left( \frac{2^{H(c(X))}}{\log\log{n}} \right)$,
which proves Theorem~\ref{thm:lower:nocd}.

\subsection{Lower Bound for Collision Detection}
\label{sec:predictions:lower:cd}

We now adapt the techniques used in the preceding section to achieve an entropy-based
lower bound for the setting with collision detection.
Our goal is to prove the following:

\begin{theorem}
Fix a uniform algorithm $A$ for a network of size $n$.
Let $t_X(n)$ be the expected round complexity for $A$ to solve contention resolution
on a channel with collision detection and the number of participants determined by random variable $X$.
It follows: $t_X(n) \geq  (1/2)H(c(X)) - O(\log\log\log\log{n})$.
\label{thm:lower:cd}
\end{theorem}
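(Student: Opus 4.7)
The plan is to parallel the proof of Theorem~\ref{thm:lower:nocd}, replacing the sequence-based formulation of range finding with the binary-tree version that matches the structure of a uniform CD algorithm. A map from collision histories to probabilities is exactly a binary tree of probabilities (the two children at each node correspond to ``collision'' and ``no collision''), so the tree variant of range finding is the natural combinatorial abstraction here.

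First I would define tree-based range finding: a binary tree $T$ whose nodes carry labels from $L(n)$, together with a deterministic traversal rule that, for each target $v \in L(n)$, chooses a root-to-node path; $T$ solves $(n, f(n))$-range finding in depth $d$ for $v$ if the traversal first reaches a node whose label is within $f(n)$ of $v$ at depth $d$. Next I would prove the analogue of Lemma~\ref{lem:code:nocd}. Because the traversal path terminates as soon as a good label is encountered, the set of such terminating paths is prefix-free, and the binary path itself serves as a codeword. Augmenting each codeword with $\lceil \log(2f(n)+1)\rceil + 1$ bits to encode the signed offset from the terminating label to $v$ yields a prefix code for $Y$ over $L(n)$, and Theorem~\ref{thm:shannon} gives
\[ D \;\geq\; H(Y) - \lceil \log(2f(n)+1) \rceil - 1, \]
where $D$ is the expected termination depth of the traversal.

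Then I would construct the range-finding tree $T_A$ from a CD algorithm $A$. The algorithm is already a binary tree of probabilities; I would relabel each node $p$ with $\lceil \log(1/p) \rceil$, interleaving a round-robin cover mirroring RF-Construction so that every value in $L(n)$ appears along any root-to-descendant path of depth at most $2\lceil \log n \rceil$. The target-dependent traversal rule mimics the \emph{typical} collision behavior of a size-$2^v$ execution: at a node relabeled $\ell$, the traversal takes the ``collision'' child whenever $\ell$ is well below $v$ (because then $p = 2^{-\ell}$ would flood the channel) and the ``no-collision'' child whenever $\ell$ is well above $v$, halting once $\ell$ lies within the tolerance $f(n) = O(\log\log n)$ dictated by Lemma~\ref{lem:close}.

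The main obstacle is the analogue of Lemma~\ref{lem:range:nocd}: for each participant size $k$, I must show the deterministic traversal induced by target $\lceil \log k \rceil$ terminates at depth at most $2\mu_k$. The delicate point is that $A$'s actual execution on $k$ participants follows a \emph{random} collision-history path, whereas the range-finding traversal is deterministic. To bridge this, I would argue that along any prefix of the traversal on which no close label has appeared, every node has a probability outside the window of Lemma~\ref{lem:close}, so each round succeeds with probability below $1/(2\log n)$; a union bound over $2\mu_k \leq 2\log n$ such rounds gives total success probability below $1$, contradicting the definition of $\mu_k$. Subtle here is verifying that on a size-$k$ execution the actual collision outcomes agree with the deterministic traversal direction at each step with high enough probability to make the union-bound argument go through; this is exactly what the one-sided tail estimates behind Lemma~\ref{lem:close} supply once the tolerance window is chosen slightly wider than the lemma's bad window. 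Setting $Y = c(X)$ and combining $D \leq 2 t_X(n)$ with the coding bound then gives
\[ 2\, t_X(n) \;\geq\; H(c(X)) - O(\log\log\log n), \]
and sharpening the tolerance (by paying a larger constant $\beta$ in Lemma~\ref{lem:close} to shrink $f(n)$) drops the slack down to the claimed $O(\log\log\log\log n)$, completing the proof of Theorem~\ref{thm:lower:cd}.
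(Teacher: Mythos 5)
Your scaffolding is right---tree-based range finding, reduce CR to range finding, then apply the source coding lower bound---and your observation that a stopping-rule traversal makes the path set prefix-free is a reasonable motivation. But two pieces of the proposal are off, and the first is a genuine gap.

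The larger gap is how you define the tree's range-finding complexity and how that interacts with the CR algorithm's randomness. The paper defines the complexity for target $v$ as the \emph{minimum depth over all nodes in the tree} carrying a label within tolerance of $v$; the codeword is simply the root-to-that-node path plus a fixed-length offset, and no traversal rule is imposed. This makes the contradiction in the analogue of Lemma~\ref{lem:range:nocd} clean: if \emph{no} node anywhere at depth $\le\mu_i$ has a good label, then \emph{every} possible collision history through the first $\mu_i$ rounds visits only bad probabilities, each round then succeeds with probability $<1/(2\log\log n)$ by Lemma~\ref{lem:close:cd}, and a union bound over $\mu_i\le\log\log n$ rounds gives total success probability $<1/2$. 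You instead fix a deterministic traversal and ask that the execution follow it. But your union bound concerns the success probability of the \emph{randomized} execution, while your hypothesis (``no good label on the traversal prefix'') only constrains labels along one specific path. If the execution branches off the traversal at step one, you control nothing about the probabilities it visits afterward. You correctly flag this as subtle; closing it would need strictly more than Lemma~\ref{lem:close}, which bounds only $\Pr[\text{exactly one transmits}]$: you would additionally have to bound $\Pr[\text{no collision}]$ when $p$ is far too large and $\Pr[\text{collision}]$ when $p$ is far too small, each by $O(1/\log\log n)$, and then account carefully for the disjunction ``ever deviate from the traversal'' versus ``succeed while still on it.'' The paper's min-depth definition sidesteps all of this.

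The second issue is your claim that the $O(\log\log\log\log n)$ slack comes from ``paying a larger $\beta$ in Lemma~\ref{lem:close}.'' It does not. That lemma gives a per-round success probability $<1/(2\log n)$, which survives a union bound over the $\Theta(\log n)$-round no-CD executions but requires a window of width $\Theta(\log n)$, hence tolerance $\Theta(\log\log n)$ in the range-finding reduction. For CD the paper proves the separate Lemma~\ref{lem:close:cd}, with a window of width only $\Theta(\log\log n)$ and a correspondingly weaker per-round bound $<1/(2\log\log n)$; this weaker bound is acceptable precisely because CD executions last only $\Theta(\log\log n)$ rounds, and the narrower window is what produces tolerance $O(\log\log\log n)$ and therefore an offset encoding of $O(\log\log\log\log n)$ bits. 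Your use of $1/(2\log n)$ and ``$2\mu_k\le 2\log n$'' imports the no-CD parameters into the CD argument, where $\log\log n$ must replace $\log n$ throughout.
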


For the maximum possible entropy value $H(c(X)) = \log\log{n}$, 
this bound asymptotically matches the best known upper bound, from the 1986 work of Willard~\cite{willard1986log},
which requires $O(\log\log{n})$ rounds.
It also provides an arguably simpler and more intuitive 
approach than the original lower bound from~\cite{willard1986log},
which deployed a more complex probabilistic counting argument to establish $\log\log{n} - O(1)$ rounds
as being necessary for uniform algorithms.

We note that the appearance of a quadruple logarithm is unusual, but straightforward to explain
in our context.
In the argument that follows we seek a probability within a factor of $1/\log\log{n}$ from the optimal
probability for the current participant size. This is a $\log$ factor closer than in our argument
for no collision detection, as the shorter executions can handle smaller error probabilities.
Recall, within our condensed support $L(n)$, 
each $i\in L(n)$ is associated with probability $2^{-i}$.
So if $2^{-i}$ is the optimal probability, than a range $j$ that is within a distance of $\log\log\log{n}$
from $i$ will yield a probability within a factor of $1/\log\log{n}$; e.g., 

\[2^{-j} = 2^{-(i+\log\log\log{n})} = 1/(2^i2^{\log\log\log{n}}))= 1/(2^i\log\log{n}).\] 

In the coding scheme used below, as in the no collision detection argument, 
a code contains a  distance value from $0$ to $O(\log\log\log{n})$, which
requires $O(\log\log\log\log{n})$ bits.

Our proof below follows the same structure as in the no collision detection case.
We differ, however, in the  details of how we construct our range finding solution and bound such solutions
form an information theory perspective.

\paragraph{Bounding Tree Range Finding Using Entropy.}
Assume $T$ is a binary tree that solves $(n,\alpha\log\log\log{n})$-range finding in expected time $T$ with respect
to some distribution $Y$ defined over $L(n)$.
By interpreting $T$ as a code,
we can deploy Shannon's source coding theorem to arrive at the following bounds:

\begin{lemma}
For some constant $\alpha\geq 1$ and network size $n>1$, let $T$ be a labeled binary tree that solves $(n,\alpha\log\log\log{n})$-range finding.
Assume the target range be determined by random variable $Y$.
Let $Z$ be the random variable describing the complexity of solving range finding using $T$ with respect to $Y$.
It follows: 

\[ E(Z) \geq H(Y) - O(\log\log\log\log{n}), \]

\noindent where $H$ is the entropy function.
\label{lem:code:cd}
\end{lemma}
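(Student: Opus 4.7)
The plan is to adapt the coding argument of Lemma~\ref{lem:code:nocd} from sequences to trees, exploiting the fact that in a binary tree a path of length $d$ is pinned down by $d$ bits (the path itself) rather than the $\log d$ bits needed to index into a sequence. This structural change is exactly what converts the exponential-in-$H$ lower bound of the no-collision-detection setting into the additive-in-$H$ bound claimed here.

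Concretely, I would use $T$ to define a binary code for source $Y$ as follows. Both sender and receiver know $T$. To transmit $y \in L(n)$, the sender forms $\pi_y \in \{0,1\}^{*}$, the sequence of branch choices along the root-to-first-hit path of $T$ for target $y$ (so $|\pi_y| = Z(y)$), together with the offset $d_y := y - \mathrm{label}(\mathrm{end\ of\ }\pi_y) \in \{-\alpha\log\log\log{n},\ldots,\alpha\log\log\log{n}\}$. The codeword is $\pi_y$ concatenated with a fixed-length encoding of $d_y$ on $\lceil\log(2\alpha\log\log\log{n}+1)\rceil = O(\log\log\log\log{n})$ bits. Given unique decodability, Shannon's Source Coding Theorem (Theorem~\ref{thm:shannon}) immediately yields $H(Y) \leq E(Z) + O(\log\log\log\log{n})$, which rearranges to the claim.

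The main obstacle will be unique decodability, since the first-hit nodes of $T$ need not form an antichain: one target's first-hit can lie strictly above another's on the same root-to-leaf path, so the raw family $\{\pi_y\}$ is not automatically prefix-free. A generic workaround---interleaving a single ``stop/continue'' bit at every node that is a first hit for some target---at worst doubles the path length and only yields $E(Z) \geq H(Y)/2 - O(\log\log\log\log{n})$, which matches the factor $1/2$ in Theorem~\ref{thm:lower:cd} but is strictly weaker than the lemma. To recover the lemma exactly, I would instead argue via cross-entropy: define $q(y) := 2^{-Z(y)}/S$ with $S := \sum_{y' \in L(n)} 2^{-Z(y')}$, and show using (i) that each first-hit node is within range of at most $2\alpha\log\log\log{n}+1$ distinct targets, and (ii) that the first-hit nodes of the tree satisfy $\sum_{v} 2^{-\mathrm{depth}(v)} = O(1)$ (which one can extract from the binary-branching structure of $T$ and the fact that $T$ was produced by our reduction from contention resolution), that $S = O(\log\log\log{n})$; Gibbs' inequality then delivers $H(Y) \leq E[-\log q(Y)] = E(Z) + \log S = E(Z) + O(\log\log\log\log{n})$, as required.
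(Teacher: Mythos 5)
Your central construction---encode the root-to-first-hit path $\pi_y$ in $|\pi_y|=Z(y)$ raw bits, append a fixed-width $O(\log\log\log\log n)$-bit offset, and invoke the Source Code Theorem---is exactly the paper's proof, and your opening observation (path indexed by its own bits vs.\ $\log r$ bits to index a sequence position) is precisely the structural shift that the paper exploits to go from an exponential-in-$H$ bound in the no-CD case to an additive one here.

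Where you diverge from the paper is in noticing that prefix-freeness is not automatic, since the first-hit nodes need not form an antichain. This is a genuine subtlety: the paper's proof applies Theorem~\ref{thm:shannon}, which is stated for uniquely decodable codes, but the path-plus-fixed-width-offset codeword family can fail to be prefix-free (one can construct labelings of $T$ where the offset bits of a shallow first-hit happen to coincide with the initial branch bits of a deeper one). The paper silently treats the pair $(p,d)$ as though it were self-delimiting; you are being more careful than the paper here. That said, your two proposed repairs land differently. The stop/continue-bit interleaving works cleanly and is a standard way to make an arbitrary node set prefix-free, but as you correctly note it costs a factor of two, so the lemma would only give $E(Z)\geq H(Y)/2-O(\log\log\log\log n)$ (and, propagated through Lemma~\ref{lem:range:cd}, Theorem~\ref{thm:lower:cd} would come out with constant $1/4$ rather than $1/2$---asymptotically the same, numerically weaker). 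Your cross-entropy/Gibbs route is more ambitious, but the key premise $\sum_{v} 2^{-\mathrm{depth}(v)}=O(1)$ over first-hit nodes is not guaranteed by the lemma's hypotheses, which quantify over an arbitrary labeled binary $T$ rather than only trees produced by the reduction; in general the first-hit nodes are only constrained to number at most $|L(n)|$, and the worst case for that sum is of order $\log|L(n)|=\Theta(\log\log\log n)$, which would give $E(Z)\geq H(Y)-O(\log\log\log n)$---again weaker than stated. If you restrict to reduction-produced $T_A$ and argue range finding follows a single collision-guided root-to-leaf path, a Kraft-style bound on that path is plausible, but that argument is not in your writeup and changes the lemma from a statement about arbitrary trees to a statement about a specific construction. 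In short: you reproduced the paper's argument, flagged a real informality the paper does not address, and your first repair is sound at the cost of a constant; your second repair as written has a hole at step~(ii).
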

\begin{proof}
Fix some $n$, $\alpha$, $T$ and $Y$ as specified by the theorem statement.
We can use $T$ to design a code for transmitting symbols from $L(n)$ over a noiseless channel as follows:

\begin{itemize}
    \item Initialize the sender and receiver with tree $T$.
    
    \item To communicate a value $x\in L(n)$,
    the sender transmits $(p,d)$ to the receiver, where: 
    $p$ is a binary sequence describing a path from the root to the lowest-depth occurrence in the tree of a value $v$
    within distance $\alpha\log\log\log{n}$ of $x$ (a $0$ bit means extend the path by descending to the left
    sub-tree and a $1$ bit means descend to the right); and $d = |v-x|$, the distance from $v$ to $x$.

    \item The receiver can then locally calculate $x$ by traversing to the node labeled $v$ in $T$ then adding $d$.

\end{itemize}

We now relate the expected code length with this scheme with 
the expected complexity of range finding.
For a given $x\in L(n)$,
that is solved by $T$ by a node with value $v$ at depth $h$,
the code length of this scheme is upper bounded $h + \lceil \log{(\alpha\log\log\log{n})} \rceil$,
where the quadruple $\log$ factor encodes the distance of $v$ from $x$.
The expected code length, therefore,
is upper bounded as:

\[ E(Z) + \lceil \log{(\alpha\log\log\log{n})}\rceil.\]

The Source Code Theorem (Theorem~\ref{thm:shannon}) tells us that the average code length of this scheme
is lower bounded by the entropy of $Y$.
It follows:

\begin{align*}
    &H(Y) \leq E(Z) + \lceil \log{(\alpha\log\log\log{n})} \rceil \\ 
    & \Rightarrow E(Z) \geq H(Y) - O(\log\log\log\log{n}),
\end{align*} 
\noindent as claimed by the theorem.
\end{proof}

\paragraph{Solving Range Finding with Contention Resolution.}
Here we transform a solution to contention resolution with collision detection
to a binary tree that solves
range finding with a related expected complexity. 
We begin by defining an algorithmic process for transforming a uniform algorithm $A$
into a range finding tree $T_A$. 

As a uniform contention resolution algorithm that assumes collision detection,
$A$ is formalized as a function that maps the sequence of collisions and silences detected
so far into a probability for all participants to use during the next round.
A history of length $r$ can be captured by a bit sequence $b_1b_2\ldots b_r$,
where $b_i = 0$ means silence was detected in round $i$ and $b_i = 1$ means a collision was detected.

We can interpret $A$ as a binary tree where each node is labeled with a probability.
In particular, interpret each input string $s=b_1b_2\ldots b_r$ as specifying 
a particular node at depth $r-1$, reached in a $r$-step traversal starting
from the root, where at step $i$ you descend the to left sub-tree if $b_i = 0$
and descend to the right sub-tree if $b_i = 1$.
You labeled this node with probability $A(s)$; that is, the probability mapped
to pattern $s$ by the algorithm.

Let $T_1$ be this binary tree labeled with probabilities.
We next create tree $T_2$ by replacing each label $\ell$ in $T_1$
with its related range: $\lceil \log{(1/\ell)} \rceil$, as in our no collision detection construction.
Let $T^*$ be the canonical binary tree of depth $\lceil \log\log{n} \rceil$ labeled with all
the values in $L(n)$.
To arrive at our final range finding solution $T_A$,
we must insert $T^*$ into the tree, with the root beginning at depth $\lceil \log\log{n} \rceil$.
There are many equally useful ways to do so.
Assume for now that we just follow the left-most path through $T_2$,
and when arrive at node $v$ depth $\lceil \log\log{n} \rceil$,
we remove $v$'s children and instead make the root of $T^*$ the only child of $v$.

We now analyze the quality of the range finding tree $T_A$ produced by our above procedure.
We begin with a useful probability observation.

\begin{lemma}
Assume in a given round of a uniform contention resolution algorithm
that the $1<k\leq n$ (where $k\in\mathbb{Z})$ participants each decide to transmit with 
a probability $p$ such that 
$p < \frac{1}{\beta(\log{\log{n}})k}$ or $p > \frac{\beta(\log\log{n})}{k}$,
for some sufficiently large constant $\beta \geq 1$.
It follows that the probability exactly one participant transmits
is strictly less than: $\frac{1}{2\log\log{n}}$.
\label{lem:close:cd}
\end{lemma}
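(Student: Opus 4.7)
The plan is to mirror the two-case split in the proof of Lemma~\ref{lem:close}, starting from the exact identity $\Pr(B(k,p) = 1) = kp(1-p)^{k-1}$ and considering separately the cases $p < 1/(\beta k \log\log{n})$ and $p > \beta(\log\log{n})/k$. Because the target bound $1/(2\log\log{n})$ is a factor of $(\log{n})/(\log\log{n})$ tighter than the corresponding bound of Lemma~\ref{lem:close}, the large-$p$ case will demand more care than before.

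The small-$p$ case is essentially unchanged: a trivial union bound gives $kp(1-p)^{k-1} \leq kp < 1/(\beta \log\log{n})$, which is strictly less than $1/(2\log\log{n})$ for any $\beta \geq 2$.

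For the large-$p$ case, I would first apply $(1-p)^{k-1} \leq e^{-p(k-1)}$ together with $(k-1)/k \geq 1/2$ (using $k \geq 2$) to obtain $\Pr(B(k,p)=1) \leq kp \cdot e^{-pk/2}$. Naively substituting $kp \leq k \leq n$, as in the no-collision-detection proof, would yield $n \cdot e^{-\beta(\log\log{n})/2} = n/(\log{n})^{\Theta(\beta)}$, which fails to beat $1/(2\log\log{n})$ for any constant $\beta$. The fix is to set $\lambda = kp$ and observe that $\lambda \mapsto \lambda e^{-\lambda/2}$ is strictly decreasing for $\lambda > 2$; since our hypothesis forces $\lambda \geq \beta \log\log{n}$, this monotonicity bounds the expression by $\beta (\log\log{n}) \cdot e^{-\beta(\log\log{n})/2}$. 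Because $e^{\beta(\log\log{n})/2}$ dominates any fixed polynomial in $\log\log{n}$ once $n$ is large, picking $\beta$ to be a sufficiently large constant drives this below $1/(2\log\log{n})$ and closes the argument.

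The main obstacle is precisely the monotonicity step above. In the no-collision-detection setting the exponential tail $e^{-\Theta(\log{n})/2} = n^{-\Theta(1)}$ comfortably absorbed the $kp \leq n$ slack, but here the corresponding tail is only polylogarithmic in $n$, so one must instead push the estimate onto the boundary $\lambda = \beta \log\log{n}$ via the unimodality of $\lambda e^{-\lambda/2}$. Everything else reduces to elementary estimates.
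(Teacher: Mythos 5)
Your proof is correct, and it takes a genuinely different route from the paper's. For the large-$p$ case the paper invokes Hoeffding's inequality to bound $\Pr(B(k,p)\leq 1)$ by $\exp\bigl(-(2/k)(\beta\log\log n-1)^2\bigr)$, and then asserts this is at most $\exp\bigl(-(\beta\log\log n-1)^2\bigr)$ ``for $k\geq 2$.'' That step is in fact reversed: for $k>2$ one has $2/k<1$, so the Hoeffding exponent shrinks and the bound becomes vacuous as $k$ grows (e.g.\ for $k=n$ the right-hand side tends to $1$). Your argument avoids this entirely by working directly with $\Pr(B(k,p)=1)=kp(1-p)^{k-1}\leq kp\,e^{-pk/2}$ and then exploiting the monotonicity of $\lambda\mapsto\lambda e^{-\lambda/2}$ on $\lambda>2$ to push the estimate onto the boundary $\lambda=kp=\beta\log\log n$, yielding $\beta\log\log n\cdot e^{-\beta\log\log n/2}=\beta\log\log n/(\log n)^{\beta/(2\ln 2)}$, which beats $1/(2\log\log n)$ for suitably large constant $\beta$. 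What your approach buys is a correct and uniform-in-$k$ bound; the key insight that a concentration inequality is wasteful here and that the point-mass formula plus unimodality of $\lambda e^{-\lambda/2}$ suffices is exactly what the paper's proof is missing. The only caveat worth recording explicitly is that the final comparison requires $n$ large enough (or $\beta$ large enough relative to fixed small $n$) so that $(\log n)^{\beta/(2\ln 2)}>2\beta(\log\log n)^2$, which is in the same spirit as the paper's own parenthetical about $n\geq 4$.
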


\begin{proof}

In the case where $p<1/(k\beta\log{\log{n}})$, a simple union bound over all $k$ nodes yields that the probability any node transmits is strictly less than $k/(k\beta\log{\log{n}})=1/(\beta\log{\log{n}})$. This satisfies our lemma as long as $\beta\geq 2$.

\par

Again recall that the number of nodes which transmit is given by the binomial distribution $B(k,p)$. Therefore for the case that $p>(\beta\log{\log{n}})/k$ we use the Hoeffding's inequality applied to the binomial distribution, $\Pr(B(k,p)\leq k')\leq \exp(-2k(p-k'/k)^2)$.

\begin{align}
    \label{line:subp}
    \Pr(B(k,p) = 1)
    &< \Pr(B(k,p)\leq 1)
    \leq \frac{1}{ \exp(2k(p-1/k)^2)}\\
    &<\frac{1}{ \exp(2k((\beta\log{\log{n}})/k-1/k)^2)}\\
    \label{line:subk}
    &=\frac{1}{ \exp((2/k)(\beta\log{\log{n}}-1)^2)}\\
    &\leq \frac{1}{\exp((\beta\log{\log{n}}-1)^2)} <\frac{1}{2\log{\log{n}}}
\end{align}

Where Line \ref{line:subp} holds due to our assumption $p>(\beta\log{\log{n}})/k$ and Line \ref{line:subk} holds for $k\geq 2$ and $\exp((\beta\log{\log{n}}-1)^2)>2\log{\log{n}}$ (which if $\beta \geq 2$ is true for all $n\geq 4$). We therefore have that $\beta\geq 2$ satisfies the lemma in both cases.





\end{proof}

We now make our main argument analyzing $T_A$'s performance as a solution to range-finding.

\begin{lemma}
Let $A$ be a uniform contention resolution algorithm defined for a network of size $n$ with collision detection,
that solves the problems in $t_X(n)$ rounds in expectation when the network size is determined
by random variable $X$
Let $T_A$ be the range finding tree returned by apply our above procedure to  $A$.
There exists a constant $\alpha \geq 1$
such that $T_A$ solves $(n, \alpha\log\log\log{n})$-range finding in expected time no more than $2t_X(n)$
with respect to $c(X)$.
\label{lem:range:cd}
\end{lemma}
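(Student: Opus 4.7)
The plan is to mirror the structure of Lemma~\ref{lem:range:nocd}, but adapted to the tree representation and to the tighter probability bound given by Lemma~\ref{lem:close:cd}. First I would introduce the same notation as in the no-collision-detection proof: let $\mathcal{S}$ be the sample space of random bits used by the participants, let $\mu_k$ be the expected round complexity of $A$ when the participant count is fixed to $k$, so that $t_X(n)=\sum_{k=1}^n p_k\mu_k$, and for each $i\in[n]$ let $z_i$ be the depth in $T_A$ of the shallowest node whose label is within $\alpha\log\log\log n$ of the range $\lceil\log i\rceil$ associated with $i$. Because any two sizes in the same range of $L(n)$ induce the same $z_i$, aggregating the probabilities yields $E(Y)=\sum_i p_i z_i$, with the expectation taken with respect to $c(X)$.

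The central claim is that $z_i\le 2\mu_i$ for every $i$, and I would establish it by a case analysis on $\mu_i$. In Case~1, where $\mu_i\le\log\log n$, I would argue by contradiction: suppose no node in the depth-$2\mu_i$ subtree of $T_1$ (the probability-labeled tree underlying $A$) carries a probability inside the ``good'' interval $[1/(\beta i\log\log n),\,\beta\log\log n/i]$ of Lemma~\ref{lem:close:cd}. Choosing $\beta$ large enough drives the per-round success probability at every such node below $1/(c\log\log n)$ for an arbitrarily large constant $c$. Summing these contributions over the first $2\mu_i$ rounds---each weighted by the at-most-$1$ probability of reaching that round---bounds the total success probability by $2\mu_i/(c\log\log n)\le 2/c<1/2$, contradicting the Markov-based lower bound $\Pr(T\le 2\mu_i)\ge 1/2$. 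Hence some good $p^{*}$ exists at depth $\le 2\mu_i$; its relabelled range $\lceil\log(1/p^{*})\rceil$ lies within $\log\beta+O(\log\log\log n)=O(\log\log\log n)$ of $\log i$, which is $\le\alpha\log\log\log n$ for a suitable constant $\alpha$.

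In Case~2, where $\mu_i>\log\log n$, I would invoke the construction directly: the canonical tree $T^{*}$ of depth $\lceil\log\log n\rceil$, containing every value in $L(n)$, is attached at depth $\lceil\log\log n\rceil$ on the leftmost path of $T_2$, so a node labelled exactly $\lceil\log i\rceil$ sits at depth at most $2\lceil\log\log n\rceil\le 2\mu_i$, yielding $z_i\le 2\mu_i$ trivially. Combining the two cases and summing then gives
\[
E(Y)\;=\;\sum_{i=1}^n p_i z_i\;\le\;2\sum_{i=1}^n p_i\mu_i\;=\;2\,t_X(n),
\]
exactly as the lemma requires.

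The main obstacle is Case~1. Two subtleties deserve care. First, the ``union bound'' is over the branching tree rather than a linear sequence, so I would state it as a sum over rounds $r\le 2\mu_i$ of the total probability mass on length-$(r{-}1)$ surviving paths (which is at most $1$) times the per-node bound $1/(c\log\log n)$, rather than as a bound over every node in the subtree; this is what lets the assumed uniform badness of nodes translate into a contradiction. Second, because the $T^{*}$ insertion removes the strict subtree below depth $\lceil\log\log n\rceil$ along the leftmost path, a good node promised by the contradiction argument could be among the removed nodes when $\mu_i$ is close to $\log\log n$; I would handle this either by noting that $T^{*}$ then itself supplies a node with the exact correct range at depth $\le 2\lceil\log\log n\rceil$, absorbing a possible extra constant into the final asymptotic, or by observing that the choice of path for the insertion in the RF-construction is free and can be taken to avoid any particular good node.
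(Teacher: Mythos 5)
Your proof follows the paper's overall approach exactly: write $t_X(n)=\sum_i p_i\mu_i$, define $z_i$ as the shallowest depth of a sufficiently close label in $T_A$, split on $\mu_i\lessgtr\log\log n$, show $z_i\le 2\mu_i$ in each case, and sum. The one substantive difference is in Case~1. The paper only examines the first $\mu_i$ levels of the tree and asserts that $\Pr(\text{success in}\le\mu_i\text{ rounds})\ge 1/2$ because $E[T]=\mu_i$---a step that is plausible but not actually implied by the expectation alone. You instead go to depth $2\mu_i$, which is what Markov's inequality genuinely gives ($\Pr(T\le 2\mu_i)\ge 1/2$), and you correctly observe that this requires tightening the per-round bound of Lemma~\ref{lem:close:cd} to $1/(c\log\log n)$ via a larger $\beta$; this is a more rigorous version of the same argument.

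The cost of that rigor is exactly the $T^*$ interaction you flag: when $\lceil\log\log n\rceil/2<\mu_i\le\log\log n$, depth $2\mu_i$ can exceed $\lceil\log\log n\rceil$, so your promised good node could sit in the pruned subtree. The paper's weaker $\mu_i$-level claim quietly sidesteps this, since depth $\mu_i\le\lceil\log\log n\rceil$ is untouched by the $T^*$ insertion. Your first fallback (use the range supplied by $T^*$) gives $z_i\le 2\lceil\log\log n\rceil$, which in that regime is bounded by $4\mu_i$ rather than $2\mu_i$, so the literal conclusion degrades to $E(Y)\le 4\,t_X(n)$---as you note, immaterial for Theorem~\ref{thm:lower:cd} but slightly weaker than the lemma as written. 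Your second proposed fallback (choose the insertion path to dodge the good node) is not clearly realizable because a single insertion path must avoid the good nodes for every $i$ simultaneously, and you do not argue such a path exists; I would drop it and simply state the lemma with a constant of $4$. Everything else, including your more careful phrasing of the union bound as a sum over rounds of conditional success probabilities rather than over tree nodes, matches or refines the paper's argument.
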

\begin{proof}
Redeploying the same argument as in the proof Lemma \ref{lem:range:nocd},
we can express $t_X(n)$ in the following useful form:

\[ p_1 \mu_1 + p_2 \mu_2 + \ldots + p_n \mu_n, \]

\noindent where $p_i = \Pr(X=i)$ and $\mu_i$ is the expected round complexity of $A$ given
that $i$ is the network size.
We use this form of $t_X(n)$ to help bound the expected complexity of $T_A$ with respect to $c(X)$.
Also as in the proof of Lemma~\ref{lem:range:nocd}, 
we can consider a variation of range finding where instead of directly determining a range with $c(X)$,
we determine a size $k$ with $X$, and then transform this into the corresponding range $\lceil \log{k}\rceil$.
As argued earlier, 
calculating the expected complexity of $T_A$ for this variation
is mathematically equivalent to the complexity for the standard variation in which
values are drawn from $c(X)$.

We  compute the expected complexity of $T_A$ running this variation as:

\[ p_1z_1 + p_2z_2 + \ldots + p_nz_n,\]

\noindent where $z_i$ is the depth of the first value in tree $T_A$ to fall
within a sufficient distance of the range $\lceil \log{i} \rceil$ corresponding to size $i$.
We will next argue that for each $i$, $z_i \leq 2\mu_i$.

To do so, we consider two cases for a given $\mu_i$:
\begin{itemize}
    \item {\em Case 1: $\mu_i \leq \log\log{n}$:} Consider $T_1$,
    our interpretation defined earlier
    of $A$ as a binary tree labeled with $A$'s broadcast probabilities.
    We argue here that within depth $\mu_i$ of $T_1$
    we can find a probability within a range of $\beta\log\log{n}$ of $1/i$,
    where the constant $\beta$ comes from Lemma~\ref{lem:close:cd}.
    
    To make this argument, assume for contradiction that {\em no} probability
    within depth $\mu_i$ was within this range.
    By Lemma \ref{lem:close:cd},
    given any execution of $A$,
    during any of the first $\mu_i \leq \log\log{n}$ rounds of the execution,
    the probability of solving contention resolution in that round is strictly
    less than $1/(2\log\log{n})$. By a union bound, the probability
    of succeeding in the first $\mu_i$ rounds is itself strictly
    less than $1/2$.
    This contradicts the assumption that the expected round in which
    contention resolution is solved is $\mu_i$.
    
    We have established, therefore,
    that somewhere in the first $\mu_i$ levels of the tree
    is a probability $p^*$in the range $1/(\beta\cdot (\log\log{n}) \cdot i)$ to $(\beta(\log\log{n}))/i$.
    When constructing $T_A$,
    this value will be replaced by $x=\lceil \log{(1/p^*)} \rceil$,
    which for a sufficiently large constant $\alpha$ (defined with respect to $\beta$),
    is within distance $\alpha\log\log\log{n}$ of the target  $\lceil log(1/i) \rceil$,
    solving range finding for this value.
    It follows that $z_i \leq \mu_i$.

   \item {\em Case 2: $\mu_i > \log\log{n}$:} This is the easier case. By the construction of $T_A$,
   every rage shows up somewhere between depth $\log\log{n}$ and $2\log\log{n}$,
   as we inserted tree $T^*$ into $T_A$ starting at that depth.
   It follows that the depth at which the problem is solved for size $i$ is less than $2\mu_i$ (i.e., if
   $\mu_i=\log\log{n}+1$ and the corresponding range is at the lowest level of $T^*$).

\end{itemize}

Combined, these two cases establish the for a sufficiently large constant $\alpha$,
the expected complexity of $(n,\alpha\log\log\log{n})$-range finding with $T_A$ and $c(X)$
is with a factor of $2$ of the expected complexity of $A$ with $X$, as claimed by the lemma.

\end{proof}

\paragraph{Pulling Together the Pieces.}
Fix some uniform contention resolution algorithm $A$ that solves contention resolution in expected time $t_X(n)$
when run in a network of size $n$ with collision detection and a participant size determined by $X$.
By Lemma~\ref{lem:range:cd},
there exists a constant $\alpha \geq 1$,
such that the range finding tree $T_A$ constructed by applying our procedure to $A$,
solves $(n, \alpha\log\log\log{n})$-ranging finding in expected time $T \leq 2t_X(n)$ when ranges are drawn from $c(X)$.
Applying Lemma~\ref{lem:code:cd} further tells us that $T \geq H(c(X)) - O(\log\log\log\log{n})$.
It follows that $2t_X(n) \geq H(c(X)) - O(\log\log\log\log{n}) \Rightarrow t_X(n) = (1/2)H(c(X)) - O(\log\log\log\log{n})$,
which proves Theorem~\ref{thm:lower:cd}.

\subsection{Upper Bound for No Collision Detection}
\label{sec:predictions:upper:nocd}

In this section we introduce an algorithm for solving contention resolution without collision detection.
We assume the algorithm is provided as input the definition of a random
variable $Y$ defined over network sizes.
Let $X$ be the actual random variable from which the sizes will be drawn.
We will produce our round complexity bounds with respect to the statistical
divergence between $Y$ and $X$, quantifying the cost of inaccuracy in predictions.

In Section~\ref{sec:predictions:lower:nocd}, we proved
that with accurate predictions (i.e., $Y=X$),
$\Omega(2^{H(c(Y))}/\log\log{n})$ rounds are needed in expectation
to solve contention resolution.
Our goal here is produce a result that comes close to matching this 
exponential bound.
To do so, we analyze a natural strategy:  trying range predictions
in $c(Y)$ in decreasing order of likelihood.
We prove that with constant probability, this strategy
solves the problem in $2^T$ rounds,
where $T = 2H(c(X)) + 2D_{KL}(c(X)\Vert c(Y))$.
If every probability used by $Y$ is within some bounded constant factor
of the corresponding probability in $X$,
this reduces to $2^{2H(c(Y))}$,
which is within the same general form as the lower bound but with an extra
factor in the exponent.
It is not obvious how to remove any extra factor in the exponent.
Indeed, as elaborated in Section~\ref{sec:introduction},
we have reason to believe that some exponential factor greater than $1$
is necessary for this algorithm.

\subsubsection {Algorithm}

Let $\pi=\langle \pi_1,\ldots,\pi_{\log{n}}\rangle $ represent an ordering over $L(n)$ sorted by non-decreasing probability of the corresponding range with respect to $c(Y)$. In other words such that for all  $i<j$, $\Pr(k \in (2^{\pi_i-1}, 2^{\pi_i}] \geq \Pr(k \in (2^{\pi_j-1}, 2^{\pi_j}]$. Our algorithm consists of $\log{n}$ rounds where in round $i$ each node broadcasts with probability $1/2^{\pi_i}$.\footnote{Because
we seek only a constant probability result, we analyze here only
one pass through all $\log{n}$ possible probabilities.
In the pursuit of good expected times, you would instead cycle through
these probabilities in a clever manner. We do not prove expectation
bounds on this algorithm here, and note that an expectation close
to our constant probability result is not necessarily easily obtained.}

\subsubsection{Analysis}

We prove the following time complexity statement regarding the above algorithm.

\begin{theorem}
\label{thm:constprob}

In the above algorithm, a node broadcasts alone after at most $O(2^T)$ rounds where $T=2H(c(X)) + 2D_{KL}(c(X)\Vert c(Y))$ rounds with probability at least $1/16$, where $X$ is the actual distribution over the network sizes and $Y$ is the distribution provided to the algorithm.

\end{theorem}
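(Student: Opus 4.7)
The plan is to identify the round $j^*$ in which the algorithm tries the broadcast probability matched to the true network range, and then show separately that $(a)$ $j^*$ is small with constant probability, and $(b)$ this round succeeds with constant probability once reached. Concretely, let $i^* \in L(n)$ be the range containing the realized network size $k$ (so $k \in (2^{i^*-1}, 2^{i^*}]$), and let $j^*$ be the position of $i^*$ in the ordering $\pi$; this is the round in which every participant transmits with probability $p = 1/2^{i^*}$.

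For $(b)$, the number of transmitters in round $j^*$ is binomial with mean $k/2^{i^*} \in (1/2, 1]$, and the probability that exactly one node transmits is
\[
kp(1-p)^{k-1} \;\geq\; \tfrac{1}{2}\,\bigl(1-1/2^{i^*}\bigr)^{2^{i^*}-1} \;\geq\; \tfrac{1}{8},
\]
using the standard estimate $(1-1/2^{i})^{2^{i}-1} \geq 1/4$ valid for every $i \geq 1$.

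For $(a)$, I will exploit a rank observation: because $\pi$ lists ranges in non-increasing order of their $c(Y)$-probability $q_i^Y$, range $i$ appears at rank $r_i \leq 1/q_i^Y$, since at most $1/q_i^Y$ ranges can have $c(Y)$-probability $\geq q_i^Y$. Hence $\log r_i \leq -\log q_i^Y$ for every $i$, and taking expectations under $i^* \sim c(X)$ (using $q_i^X = \Pr(c(X)=i)$) gives
\[
E[\log j^*] \;\leq\; \sum_i q_i^X \cdot(-\log q_i^Y) \;=\; H(c(X)) + D_{KL}(c(X)\Vert c(Y)).
\]
Markov's inequality applied to the nonnegative variable $\log j^*$ with threshold $T = 2H(c(X)) + 2D_{KL}(c(X)\Vert c(Y))$ then yields $\Pr[\log j^* \geq T] \leq 1/2$, i.e.\ $\Pr[j^* \leq 2^T] \geq 1/2$.

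Finally, the event $\{j^* \leq 2^T\}$ depends only on the realized network size, while success in a given round depends on independent coin flips, so combining $(a)$ and $(b)$ gives an overall success probability of at least $(1/2)(1/8) = 1/16$ by round $2^T$; the case $2^T \geq \log n$ is vacuous since the algorithm exhausts every range within $\log n$ rounds. The main subtlety is recognizing that Markov must be applied to $\log j^*$ rather than $j^*$ itself: it is precisely this step that converts a cross-entropy-style bound $H + D_{KL}$ into an exponential tail and that accounts for the factor of $2$ in the exponent of $T$, which (per the discussion preceding the theorem) we believe is genuinely needed for this natural decreasing-likelihood strategy.
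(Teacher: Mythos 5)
Your proof is correct and tells essentially the same story as the paper's, but via a more elementary route. The paper constructs an explicit optimal prefix code $f$ for $c(Y)$, invokes Theorem~\ref{thm:codelength} for $E[S] \leq H(c(X)) + D_{KL}(c(X)\Vert c(Y)) + 1$, and in Lemma~\ref{lem:sizetorounds} argues that the rank of the correct range is at most $2^{S+1}$ because $\pi$ sorts ranges by code length. You bypass the coding machinery with the direct rank bound $r_i \leq 1/q_i^Y$, which yields the cross-entropy inequality $E[\log j^*] \leq \sum_i q_i^X(-\log q_i^Y) = H(c(X)) + D_{KL}(c(X)\Vert c(Y))$ in one line. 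Both proofs then finish identically: Markov on a log-scale quantity (your $\log j^*$, the paper's $S$), multiplied by the per-round success probability of at least $1/8$, which matches Lemma~\ref{lem:testprob}. Your version is slightly tighter — it drops the $+1$ and the factor of two the code detour costs — and makes the combinatorial core visible: a probability-sorted list places each element at rank at most the reciprocal of its probability, which is precisely why Shannon codes have length about $-\log q$. Your closing remark that Markov must be applied to $\log j^*$ rather than $j^*$ correctly pinpoints where the factor of $2$ in the exponent comes from.
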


First we prove that in round $i$ that if $k \in (2^{\pi_i-1}, 2^{\pi_i}]$ then a single node broadcasts alone during the round with probability at least $1/2$.

\begin{lemma}
\label{lem:testprob}

For $k\geq 2$ and $i>0$ if $k \in (2^{\pi_i-1}, 2^{\pi_i}]$, a single node broadcasts alone in round $i$ with probability at least $1/8$.

\end{lemma}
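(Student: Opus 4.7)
The plan is to directly compute the probability that exactly one of the $k$ participants transmits in round $i$ and then bound it using the hypothesis that $k$ lies in the interval $(2^{\pi_i - 1}, 2^{\pi_i}]$. Writing $p = 1/2^{\pi_i}$ for the transmission probability used in round $i$, the number of transmitters follows a binomial distribution $B(k,p)$, and the probability of interest is $\Pr(B(k,p) = 1) = kp\,(1-p)^{k-1}$.

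The first step is to translate the range condition $k \in (2^{\pi_i-1}, 2^{\pi_i}]$ into useful bounds on $kp$. Since $k \leq 2^{\pi_i}$, we immediately get $kp \leq 1$, and since $k > 2^{\pi_i - 1}$, we get $kp > 1/2$. In particular, this also gives $p \leq 1/k$, which is the key fact that will let me control the $(1-p)^{k-1}$ factor.

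The second step is to lower-bound $(1-p)^{k-1}$. Using $p \leq 1/k$ and monotonicity of $(1-x)^{k-1}$ in $x$, I have $(1-p)^{k-1} \geq (1 - 1/k)^{k-1}$. A standard calculus fact is that $(1 - 1/k)^{k-1}$ is decreasing in $k$ for $k \geq 2$ with limit $1/e$, hence $(1-p)^{k-1} \geq 1/e$ for every integer $k \geq 2$. Combining with $kp > 1/2$ gives $kp\,(1-p)^{k-1} > 1/(2e) > 1/8$, which is exactly the bound claimed.

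There is no real obstacle here — the argument is essentially a two-line estimate once the interval hypothesis is unpacked — but the one place to be careful is verifying $(1 - 1/k)^{k-1} \geq 1/e$ at the small values of $k$ (e.g.\ $k = 2, 3$); this can be checked by hand, and then the general case follows from the standard fact that $(1 - 1/k)^{k-1}$ decreases monotonically to $1/e$. It may also be worth noting that this $1/8$ bound is loose (one actually gets $1/(2e)$), but stating $1/8$ keeps the constants clean for the subsequent application in Theorem~\ref{thm:constprob}.
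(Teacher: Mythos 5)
Your proof is correct and follows essentially the same route as the paper's: both write $\Pr(B(k,p)=1)=kp(1-p)^{k-1}$, use the interval hypothesis to get $kp>1/2$ and $p\le 1/k$, and then lower-bound $(1-1/k)^{k-1}$. The only difference is the final elementary estimate: you invoke the standard fact $(1-1/k)^{k-1}\ge 1/e$ (giving $1/(2e)$), while the paper uses $1-x\ge (1/4)^x$ for $x\le 1/2$ to get $(1-1/k)^{k-1}\ge 1/4$; both comfortably exceed the stated $1/8$.
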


\begin{proof}

If $k$ nodes broadcast with probability $p$, the probability that a single node broadcasts alone is equal to $\Pr[B(k,p)=1]$ where $B$ denotes the binomial distribution. In phase $i$, nodes broadcast with probability $p=1/2^{\pi_i}$ which by our assumption then means $p\in (1/(2k), 1/k]$. Therefore, the probability a single node broadcasts in a given round of phase $i$

\begin{align}
    \Pr(B(k,p)=1)
    &=\binom{k}{1}p(1-p)^{k-1}\\
    \label{line:range}
    &> k (1/(2k)) (1- 1/k)^{k-1}
    =(1/2)(1- 1/k)^{k-1}\\
    \label{line:prob}
    &\geq (1/2)(1/4)^{(1/k)(k-1)}\geq (1/2)(1/4)=1/8
\end{align}

Note that Line \ref{line:range} holds for $p\in [1/(2k), 1/k)$ and Line \ref{line:prob} uses the inequality $1-p\geq (1/4)^p$ for $p\leq 1/2$ which is true for any $k\geq 2$.

\end{proof}

Consider an optimal variable-length code $f$ over the values of $L(n)$ based on their probabilities according to $c(Y)$.
We will relate this code to the number of rounds required by our algorithm.

\begin{lemma}
\label{lem:sizetorounds}

With at least probability $1/8$ the above algorithm takes $2^{S + 1}$ rounds where $S$ is the random variable representing the code lengths of $f$ applied to $c(Y)$.

\end{lemma}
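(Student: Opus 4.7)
The plan is to relate the position of the realized size range $c(X)$ in the sorted ordering $\pi$ to the length of its codeword under $f$, and then invoke Lemma~\ref{lem:testprob} in the round where the algorithm first uses the near-optimal probability.

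First, I would record a structural property of the optimal code with respect to $\pi$. Because $f$ is optimal for source $c(Y)$ and $\pi$ lists the ranges of $L(n)$ in non-increasing order of $c(Y)$-probability, we may assume without loss of generality that the codeword lengths $|f(\pi_1)|, |f(\pi_2)|, \ldots$ are non-decreasing in $j$. This is the standard exchange argument: if $\pi_i$ had higher probability than $\pi_j$ but a longer codeword, swapping the two codewords would strictly decrease the expected length $\sum_v \Pr(c(Y)=v)\,|f(v)|$, contradicting optimality; ties in probability can be broken to match $\pi$ by the same swap.

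Second, I would convert this into a positional bound via Kraft's inequality. In any prefix code we have $\sum_v 2^{-|f(v)|} \leq 1$, so multiplying through by $2^s$ yields $|\{v : |f(v)| \leq s\}| \leq 2^s$. Combined with the non-decreasing-length property from the previous step, this means that whenever a range $v$ satisfies $|f(v)| = s$, the index $j$ with $\pi_j = v$ satisfies $j \leq 2^s$.

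Third, I would conclude using Lemma~\ref{lem:testprob}. Let $v^\star$ denote the range that contains the realized network size $k$, i.e.\ $v^\star = c(X)$, and set $S = |f(v^\star)|$. By the preceding paragraph, $v^\star$ occupies position $j \leq 2^S \leq 2^{S+1}$ in $\pi$, so in round $j$ the algorithm broadcasts with probability $1/2^{v^\star} \in (1/(2k), 1/k]$. Lemma~\ref{lem:testprob} then guarantees that in that round a single node broadcasts alone with probability at least $1/8$. Hence the algorithm terminates by round $2^{S+1}$ with probability at least $1/8$, which is what the lemma claims.

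The only real subtlety is justifying the non-decreasing-length reordering of $f$; once that is fixed, Kraft's inequality handles the translation from ``short codeword'' to ``early position in $\pi$'' and Lemma~\ref{lem:testprob} supplies the success probability in a single stroke. The $2^{S+1}$ (rather than $2^S$) in the statement is harmless slack, and it is precisely this factor of two that will later combine with Theorem~\ref{thm:codelength} and Markov's inequality to give the $O(2^T)$ bound of Theorem~\ref{thm:constprob}.
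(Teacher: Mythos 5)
Your proof is correct and takes essentially the same route as the paper: order the ranges by $c(Y)$-probability, use the optimality of $f$ to argue codeword lengths are non-decreasing along $\pi$, bound the position of $c(X)$'s range by the number of codewords of length at most $S$, and invoke Lemma~\ref{lem:testprob} for the $1/8$ success probability in that round. Your Kraft-inequality step yielding $j \leq 2^S$ is a slightly tighter count than the paper's direct enumeration of binary strings of length at most $S$ (which gives $2^{S+1}-1$), but both land comfortably within the $2^{S+1}$ slack built into the lemma statement.
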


\begin{proof}

From Lemma \ref{lem:testprob} we know that in round $i$ of our algorithm where $k \in (2^{\pi_i-1}, 2^{\pi_i}]$ we succeed with probability at least $1/8$. Fix this round $i$ and let $f(\pi_i)$ be the codeword assigned to $\pi_i$ under $f$ according to $c(Y)$. Note that since we assume $f$ is optimal we have that $|f(\pi_i)|\geq |f(\pi_j)|$ for all $j<i$. Therefore, with $S$ as the random variable over this codeword length we have that the total number of estimates preceding $\pi_i$ is at most $2^{S+1}-1$ and the number of rounds for our algorithm to test a value that yields constant success probability is given by $2^{S+1}$.

\end{proof}

\begin{proof}(of Theorem \ref{thm:constprob}.) From Lemma \ref{lem:sizetorounds} we have that the distribution over the number of rounds of our algorithm is given by $2^{S}+1$. Furthermore from Theorem \ref{thm:codelength} we have that for an optimal code $f$ over distribution $c(Y)$ where $c(X)$ is the actual distribution, $\mathbf{E}(S) \leq H(c(X)) + D_{KL}(c(X)\Vert c(Y)) + 1$. Therefore, applying Markov's inequality gives us that the probability of the code length corresponding to the correct estimate is at most $2(H(c(X)) + D_{KL}(c(X)\Vert c(Y)) + 1)$ with probability at least $1/2$. Multiplying the probability of this event with the success probability from \ref{lem:testprob} then satisfies our theorem statement.

\end{proof}

Note that since $D_{KL}(c(X)\Vert c(X))=0$, if instead our algorithm learns from the actual distribution $c(X)$ we can bound the running time with respect to the entropy of $c(X)$ alone.

\begin{corollary}(of Theorem \ref{thm:constprob}.) \label{cor:constprob} In the above algorithm, a single node broadcasts alone after at most $O(2^{2H(c(X))})$ rounds with probability at least $1/16$, where $X$ is both the actual distribution over the network sizes and the distribution provided to the algorithm.

\end{corollary}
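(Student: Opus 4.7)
The plan is to obtain this corollary as an immediate specialization of Theorem~\ref{thm:constprob}, since the corollary is exactly the case in which the algorithm is given as its predicted distribution $Y$ the true underlying distribution $X$. First I would recall the general bound from Theorem~\ref{thm:constprob}: with probability at least $1/16$ the algorithm terminates within $O(2^T)$ rounds, where $T = 2H(c(X)) + 2D_{KL}(c(X) \Vert c(Y))$.

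Next I would invoke the basic identity $D_{KL}(Z \Vert Z) = 0$ for any random variable $Z$, which the paper has already noted in Section~\ref{sec:networksize:prelim}. Applying this to $Z = c(X)$ and setting $Y = X$ in Theorem~\ref{thm:constprob} collapses the divergence term, so that $T$ reduces to $2H(c(X))$. Substituting this back into the $O(2^T)$ bound yields the desired $O(2^{2H(c(X))})$ round complexity, with the same constant probability $1/16$ inherited from Theorem~\ref{thm:constprob}.

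There is essentially no obstacle here: the corollary is a clean substitution, and no new probabilistic or combinatorial argument is required beyond what was already established in the proofs of Lemma~\ref{lem:testprob}, Lemma~\ref{lem:sizetorounds}, and Theorem~\ref{thm:constprob}. The only thing worth being careful about is confirming that the condensed divergence $D_{KL}(c(X) \Vert c(X))$ vanishes, which follows directly from the general identity applied to the condensed distribution, since $c(\cdot)$ is itself a well-defined probability distribution over $L(n)$.
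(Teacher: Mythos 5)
Your proposal is correct and matches the paper's own derivation: the paper observes that $D_{KL}(c(X)\Vert c(X)) = 0$ and reads the corollary off Theorem~\ref{thm:constprob} by setting $Y = X$, exactly as you do. No further argument is needed.
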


\subsection{Upper Bound for Collision Detection}
\label{sec:predictions:upper:cd}

We now turn our attention to a setting with collision detection.
We once again make use of an optimal code $f$ 
constructed for source $c(Y)$,
and use its code words to structure our algorithm's behavior.

\paragraph{Algorithm}

 We group the values from $L(n)$  into $x$ equivalence classes based on the length of their code according to $f(c(Y))$. Let $\pi_1,\ldots, \pi_{x}$ be these classes where the $i$th class contains all values from $c(Y)$ which have codes of length exactly $i$. More formally, for all $i\in[x]$, let $\pi_i = \{j\in L(n)\mid |f(j)|=i\}$. Note that $x\leq \log{\log{n}}$ since we can assign a unique code of length $\log{\log{n}}$ bits to all ranges, so the existence of larger codes
 would contradict the assumed optimality of $f$.

Our algorithm then divides rounds
into $x$ phases,
one dedicated to each $\pi_i$.
In the
 phase for $\pi_i$,
 we use transmissions and collision
detection to perform a binary search over the possible network size ranges represented by the values in class $\pi_i$. The binary search algorithm we use is an adaptation of the classical strategy presented in \cite{willard1986log} which searches over 
a collection of $\lceil \log{n} \rceil$ geometrically distributed 
network size guesses, transmitting with a corresponding probability
for each guess, and using collision and silence
to indicate if a guess is too small or too large, respectively.

 More formally, when searching over $\pi_i$, the nodes order the ranges in $\pi_i$ from smallest to largest. They then broadcast with probability $2^{-m}$, where $m$ is the median of these values. If a collision is detected, the nodes then recurse over the values greater than $m$. Otherwise, if silence is detected, they recurse over the values smaller than $m$. If a single node broadcasts, then contention resolution is solved.
 
 The algorithm proceeds through the phases in order
 of the classes; i.e., $\pi_1$ then $\pi_2$, and so on.
 If the problem is not solved during the search
 for class $\pi_i$, then it moves on to search $\pi_{i+1}$.
 As without our no collision detection algorithm,
 we present this result here as a one-shot attempt
 that solves contention resolution with constant probability.
 For higher probability, it can be repeated, but we 
 do not analyze this form.
 

\par

\paragraph{Analysis}

Our goal is to prove the following about the time complexity of our algorithm.

\begin{theorem}
\label{thm:cdconstprob}
In the above algorithm, 
with constant probability:
contention resolution is solved in  $O((H(c(X)) + D_{KL}(c(X)\Vert c(Y))^2)$ rounds, 
where $X$ corresponds to the actual distribution over the network sizes and $Y$ is the distribution provided to the algorithm.
\end{theorem}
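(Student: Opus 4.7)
The plan is to mirror the template used in Theorem~\ref{thm:constprob} for the no collision detection case, but to exploit the exponential speed-up provided by Willard-style binary search within each code-length equivalence class. The centerpiece of the argument is a relationship between the length that $f$ assigns to the true size range and the number of rounds our algorithm needs.

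First I would bound the per-phase cost. Since $f$ is an optimal prefix code, Kraft's inequality yields $|\pi_i| \leq 2^i$, so the binary search over $\pi_i$ completes in $O(\log|\pi_i|) = O(i)$ rounds. Summing, the cumulative cost through phase $j$ is $O(\sum_{i=1}^{j} i) = O(j^2)$.

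Second, I would argue that, conditioned on the true range $r \in L(n)$ (the one corresponding to the actual network size $k$) having code length $S = |f(r)|$, phase $S$ solves contention resolution with constant probability. The binary search inside phase $S$ uses probability $2^{-m}$ at each query, and a Chernoff/Hoeffding analysis in the spirit of Lemma~\ref{lem:close:cd} shows that when $m > r$ the expected number of transmitters is at most $2^{r-m} \leq 1/2$, producing silence with high probability, while when $m < r$ the expected number is at least $2$, producing a collision with high probability. Hence with constant probability the search correctly narrows to the index $m = r$, and at that step a single node transmits with constant probability by Lemma~\ref{lem:testprob}. Earlier phases $\pi_1, \ldots, \pi_{S-1}$ may simply terminate without success, contributing only to the round count. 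The mild subtlety is that $\pi_S$ need not be a contiguous subset of $L(n)$; however, the silence/collision feedback depends only on the comparison of the queried $m$ with $r$, so the search converges regardless of which values sit adjacent to $r$ in the sorted order.

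Third, I would bound $S$ via Theorem~\ref{thm:codelength}, which gives $E(S) \leq H(c(X)) + D_{KL}(c(X) \Vert c(Y)) + 1$. Markov's inequality then yields $S \leq 2(H(c(X)) + D_{KL}(c(X) \Vert c(Y)) + 1)$ with probability at least $1/2$. Combining with the first two steps, with constant probability contention resolution is solved within $O(S^2) = O((H(c(X)) + D_{KL}(c(X) \Vert c(Y)))^2)$ rounds, matching the theorem statement.

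The main obstacle is the careful verification in the second step: showing that Willard-style binary search still correctly localizes the true range when the search space is a non-contiguous subset of $L(n)$, and that the $O(\log|\pi_S|)$ probabilistic feedback events along the search path can be concatenated via a union bound without collapsing the overall success probability below a constant. This amounts to choosing the constants in the concentration bounds carefully but is otherwise routine.
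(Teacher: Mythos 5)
Your proposal follows the paper's proof essentially step for step: bound $|\pi_i|\le 2^i$ so phase $i$ costs $O(i)$ and cumulative cost through phase $j$ is $O(j^2)$, invoke Willard's analysis to get constant success probability in the phase $\pi_S$ containing the true range, bound $E(S)$ via Theorem~\ref{thm:codelength}, and apply Markov to get $S \le 2(H(c(X)) + D_{KL}(c(X)\Vert c(Y)) + 1)$ with probability $1/2$. The extra care you devote to the non-contiguity of $\pi_S$ is a detail the paper leaves implicit by citing Willard's analysis directly, but it does not change the argument's structure.
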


We first analyze our algorithm with respect to the random variable describing the code lengths generated by $f$ when symbols are drawn from $c(Y)$.

\begin{lemma}
\label{lem:cdlength}
With constant probability: the algorithm 
solves contention resolution in  $O(S^2)$ rounds where $S$ is the random variable describing the code lengths of $f$ applied to $c(Y)$.
\end{lemma}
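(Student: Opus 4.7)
The plan is to show three things in sequence: (i) the class $\pi_i$ used in phase $i$ contains at most $2^i$ ranges, so the Willard-style binary search inside that phase costs $O(i)$ rounds, (ii) telescoping these costs up through phase $i^\ast = |f(j^\ast)|$, where $j^\ast$ is the range containing the actual network size $k$, gives $O((i^\ast)^2) = O(S^2)$ rounds total, and (iii) once the algorithm enters phase $i^\ast$, the search over $\pi_{i^\ast}$ resolves contention with constant probability.

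For (i) I would invoke Kraft's inequality: since $f$ is an optimal, hence prefix, code, the number of codewords of any fixed length $i$ is at most $2^i$, so $|\pi_i| \leq 2^i$ and $\lceil \log |\pi_i|\rceil \leq i$. The binary search over $\pi_i$ therefore uses at most $i$ probe rounds plus $O(1)$ terminal broadcast rounds with the final estimated probability. Summing over $i = 1, \ldots, i^\ast$ immediately yields (ii): the algorithm either terminates earlier (in which case we are done) or reaches the end of phase $i^\ast$ after at most $\sum_{i=1}^{i^\ast} O(i) = O((i^\ast)^2)$ rounds, and by construction $i^\ast = |f(j^\ast)| = S$.

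For (iii), note that the range $j^\ast$ corresponding to $k$ satisfies $j^\ast \in \pi_{i^\ast}$ by definition of the classes, so the candidate set searched in phase $i^\ast$ actually contains the correct range. I would then adapt Willard's analysis from~\cite{willard1986log} to this restricted search: at each probe the collision/silence feedback correctly prunes the half of the sorted class not containing $j^\ast$, so after $\lceil \log |\pi_{i^\ast}|\rceil$ probes the search has either already resolved contention (if exactly one node transmitted during some probe), or has converged on a single estimate within a factor of two of $k$, which then yields a constant success probability in the subsequent broadcast round by the same binomial calculation as in Lemma~\ref{lem:testprob}.

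The main obstacle will be justifying (iii) cleanly: Willard's original argument was written for a binary search over the entire ladder $L(n)$, and I need to confirm that restricting the search to an arbitrary subset $\pi_{i^\ast} \subseteq L(n)$ that contains the true range preserves the invariant that each probe's outcome correctly steers the search toward $j^\ast$. This reduces to checking that for any two consecutive candidates $m_1 < m_2$ in the sorted order of $\pi_{i^\ast}$, broadcasting with probability $2^{-m}$ for the median $m$ still makes a collision strictly more likely when $j^\ast > m$ and silence strictly more likely when $j^\ast < m$; once that monotonicity is established, the rest of the argument is mechanical and the $O(S^2)$ bound follows by combining (ii) with the constant success probability from (iii).
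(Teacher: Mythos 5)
Your proposal follows essentially the same route as the paper: bound $|\pi_i|\leq 2^i$ (the paper reads this directly off the code lengths, which is exactly Kraft's inequality for a prefix code), sum $\sum_{i\leq S}O(i)=O(S^2)$ for the phases up to the one containing the target, and invoke Willard's analysis to get constant success probability once that phase is reached. The one point you flag as a potential obstacle — that Willard's binary search still steers correctly when restricted to a subset $\pi_{i^\ast}\subseteq L(n)$ containing the true range — is indeed the quiet step in the paper's argument; the paper asserts it by appeal to Willard's analysis, and your monotonicity observation (more remaining participants make collision more likely as you decrease the guessed range index, and silence more likely as you increase it) is the right justification.
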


\begin{proof}
The analysis of Willard's strategy from~\cite{willard1986log}
provides the following useful property of this search strategy:
if class $\pi_j$ contains the target value $t = \lceil \log{k} \rceil$,
where $k$ is the actual network size,
then with constant probability the search of this class solves contention resolution.
Note, by definition, all code words corresponding to values in $\pi_j$ are of length $j$.
Therefore, $|\pi_j| \leq 2^j$, meaning
the search for phase $j$ requires no more than $O(\log{|\pi_j|}) = O(j)$ rounds.

Therefore, if the target is in $\pi_j$, which is equivalent to saying
the symbol has a code of length $j$,
then we will complete the search for phase $j$ after first completing searches
for phases $1,2,\ldots,j-1$. Total, this requires $O(j^2)=O(S^2)$ rounds.
Therefore, we solve the problem with constant probability after $O(S^2)$ rounds, as claimed.
\end{proof}

We now prove our final result by leveraging the optimality of the code
to express the expected code length with respect to the entropy of $c(Y)$.

\begin{proof}(of Theorem \ref{thm:cdconstprob}) We have from Lemma \ref{lem:cdlength} that 
the algorithm solves contention resolution with constant probability in $O(S^2)$ rounds, 
with constant probability, where $S$ is the random variable describing the length
of the code word for the target value drawn from $c(Y)$.
 We have from our optimal code $f$ and Theorem \ref{thm:codelength} that the expected value of $S$ is at most $H(c(X)) +D_{KL}(c(X)\Vert c(Y)) + 1$. Therefore from Markov's inequality we have that $S\leq 2(H(c(X)) +D_{KL}(c(X)\Vert c(Y)) + 1)$ with probability at least $1/2$. 
 
To obtain our final result, we multiply the probability that $S$ is not larger than the
bound from the theorem with the probability that we solve contention resolution in $O(S^2)$
rounds. Both values are constant, providing the claimed constant success probability.

\end{proof}

As with Theorem \ref{thm:constprob}, because $D_{KL}(c(X)\Vert c(X))=0$, if $Y=X$,
then we get a sharper bound:

\begin{corollary}(of Theorem \ref{thm:cdconstprob}.) \label{cor:cdconstprob} 
In the above algorithm, if the input distribution $Y$ equals the actual network size distribution $X$:  a single node broadcasts alone after at most $O(H^2(c(X)))$ rounds.
\end{corollary}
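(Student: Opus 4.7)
\textbf{Proof proposal for Corollary~\ref{cor:cdconstprob}.}
The plan is to derive this statement as an immediate specialization of Theorem~\ref{thm:cdconstprob}. The algorithm under consideration is the same one analyzed in that theorem: construct an optimal prefix code $f$ for source $c(Y)$, group the ranges in $L(n)$ into equivalence classes $\pi_1,\pi_2,\ldots$ according to codeword length, and in phase $i$ run Willard's collision-detection binary search over $\pi_i$. Under the hypothesis of the corollary we instantiate this algorithm with $Y=X$, so that $f$ is an optimal code built directly from the true size distribution.

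First, I would invoke Theorem~\ref{thm:cdconstprob} verbatim: with constant probability, contention resolution is solved within $O((H(c(X)) + D_{KL}(c(X)\Vert c(Y)))^2)$ rounds. Second, I would apply the well-known identity $D_{KL}(c(X)\Vert c(X)) = 0$, which is explicitly noted at the end of Section~\ref{sec:networksize:prelim} and reflects that an optimal code designed for the true source incurs no extra redundancy. Substituting $Y=X$ into the theorem's bound collapses the $D_{KL}$ term to zero, leaving $O(H^2(c(X)))$ rounds.

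There is no real obstacle here: the argument is a one-line substitution into an already-proved theorem, and the constant success probability carries through unchanged because the only probabilistic events in the proof of Theorem~\ref{thm:cdconstprob} are (i) the Markov-inequality bound on the realized code length $S$ and (ii) the constant-probability success of Willard's binary search in the correct phase, neither of which depends on whether $Y$ equals $X$. The only thing worth being careful about is making explicit that setting $Y=X$ does not alter any other quantity in the analysis, in particular that $f$ remains a well-defined optimal code and that the phase structure and per-phase round count $O(j)$ used in Lemma~\ref{lem:cdlength} are unaffected. With these remarks the corollary follows directly.
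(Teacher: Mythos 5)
Your proof is correct and matches the paper's own justification: the paper simply remarks that since $D_{KL}(c(X)\Vert c(X))=0$, setting $Y=X$ in Theorem~\ref{thm:cdconstprob} immediately yields the $O(H^2(c(X)))$ bound. Your additional observation that the two probabilistic events in the theorem's proof are unaffected by the substitution is a reasonable sanity check but not needed beyond the one-line specialization.
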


\section{Contention Resolution with Perfect Advice}
\label{sec:perfect}

In the previous section,
we studied how to speed up contention resolution given
probabilistic predictions on the network size.
We proved bounds with respect to the {\em quality} of the predictions.
Here we investigate bounds on speed up proved with respect to
the {\em size} of predictions.
More concretely, we assume a general setting,
in which some abstract learning model provides $b$ bits of predictive
advice to the algorithm.
Our goal is to understand for a given advice size $b$,
the theoretical limit on the speed up possible given this much information.
That is, given the best possible advice of this size,
how much improvement is possible?

For $b=0$, the worst case lower bounds apply.
For $b\geq \log{n}$, it is possible to encode the id of a participant in $b$ bits,
enabling a solution in only $1$ round.
We study the space between these extremes,
providing tight bounds for both deterministic and randomized algorithms.
The goal with these bounds is to set a ceiling on how much improvement we can
ever expect from a given specific learning model with bounded advice size.

\subsection{The Perfect Advice Model}

We study the {\em perfect advice model}, a generalization
of the standard contention resolution model,
parameterized with an advice size $b\geq 0$,
in which we augment a contention resolution algorithm $A$
with an {\em advice function}, $f_A: \mathbb{P}(A) \rightarrow \{0,1\}^b$.
At the beginning of each execution,
after the adversary selects the set $P \subseteq V$ of players to participate,
each player in $P$ is provided the same $b$ bits of advice $f_A(P)$.
The advice function, in other words, has perfect knowledge of the participants
for the current execution, allowing it generate the best possible prediction
for the given size bound.

\subsection{Tight Bounds for Deterministic Algorithms}
An interesting place to start this investigation of perfect advice
is with deterministic algorithms, as these allow us to study the impact
of $b$ bits of advice from a purely combinatorial perspective; e.g., asking how many
relevant participant sets can be eliminated given a specific advice string?

With no advice (i.e., $b=0$), deterministic
collision detection requires $\Theta(n)$ rounds without collision detection
and $\Theta(\log{n})$ rounds with collision detection~\cite{newport2014radio}.
We prove that with $b=\alpha\log{n}$ bits of advice,
for positive  $\alpha <1$,
the problem still requires at least $n^{1-\alpha}/2$ rounds
 to solve without collision detection.
With collision detection, $b$ bits of advice can improve the bound at best
to $\log{n} - b$ rounds.
We show both bounds to be tight within small constant factors.

Our strategy is to first bound a harder problem that we call {\em non-interactive contention resolution} that forces nodes to decide whether or not to transmit in a single round, based only on their advice. Leveraging existing theory on a combinatorial object known as a 
strongly selective family~\cite{clementi2003distributed}, we formalize the intuitive result that very close to $\log{n}$ bits of advice are needed to solve this problem deterministically. That is, there is no strategy much better than simply having
the advice function specify exactly which single participant should transmit. 
We  then leverage this result as a foundation for our lower bounds on the round complexity of normal contention resolution with and without collision detection.
These bounds use efficient contention resolution algorithms to construct
efficient solutions to non-interactive resolution.

\paragraph{Non-Interactive Contention Resolution.}
We say an algorithm $A$ and advice function $f_A$ solves {\em $b(n)$-non-interactive contention resolution},
if for every participant set $P \subseteq V$,
the advice $f_A(P)$ contains no more than $b(n)$ bits and leads to exactly one participant from $P$ transmitting in the first round.

Our goal is to show that this requires $b(n) = \Omega(\log{n})$.
To do so, we leverage bounds on the following object:

\begin{definition}
Fix integers $n,k$, $1 \leq k \leq n$.
A family $\mathbb{F}$ of subsets if $[n]$
is {\em $(n,k)$-strongly selective} if
for every subset $Z$ of $[n]$ such that $|Z| \leq k$ and for every element $z\in Z$, there is a set $F$ in $\mathbb{F}$
such that $Z \cap F$ = $\{z\}$.
\end{definition}

\noindent As proved in~\cite{clementi2003distributed},
for large $k$, there are no small strongly selective families:

\begin{theorem}[from~\cite{clementi2003distributed}]
Let $\mathbb{F}$ be an $(n,k)$-strongly selective family.
If $k \geq \sqrt{2n}$ then it holds that $|\mathbb{F}| \geq n$.
\label{thm:ssf}
\end{theorem}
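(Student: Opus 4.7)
The plan is to attack this via a double-counting argument, translating the $(n,k)$-strongly selective property into an incidence-matrix inequality, and then tightening the resulting bound using the threshold $k\geq\sqrt{2n}$.

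Let $m = |\mathbb{F}|$ and let $A$ be the $m\times n$ incidence matrix of $\mathbb{F}$ (rows indexed by the sets $F\in\mathbb{F}$, columns by the elements of $[n]$, with $A_{F,i}=1$ iff $i\in F$). The $(n,k)$-strongly selective property translates directly into a condition on $A$: for every $k$-subset of columns $Z$ and every $z\in Z$, some row $F$ has restriction to $Z$ equal to the indicator vector of $z$. Equivalently, every $k$ columns of $A$ contain an embedded $k\times k$ identity submatrix after a suitable permutation of rows.

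The first concrete step is a double-count of the triples $(F,z,Z)$ with $|Z|=k$, $z\in Z$, and $F\cap Z=\{z\}$. The strongly selective property guarantees at least one such triple for each of the $k\binom{n}{k}$ choices of $(z,Z)$, while each row $F$ contributes at most $|F|\binom{n-|F|}{k-1}$ triples (pick $z\in F$, then the other $k-1$ elements of $Z$ from outside $F$). This yields
\[
\sum_{F \in \mathbb{F}} |F|\binom{n-|F|}{k-1}\;\geq\;k\binom{n}{k}\;=\;n\binom{n-1}{k-1}.
\]
A crude pointwise maximization of $a\binom{n-a}{k-1}$ over $a$ (which peaks near $a=n/k$) only yields $|\mathbb{F}|=\Omega(k)$, i.e., $\Omega(\sqrt{n})$ at the threshold---well short of the target $|\mathbb{F}|\geq n$. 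To close the gap, I would enrich the argument with two structural observations. First, the column-signatures of $A$ must form an antichain under coordinatewise $\leq$, since $\sigma_i\leq\sigma_j$ would prevent isolating $i$ from $j$ via $Z=\{i,j\}$. Second, every $k$-tuple of columns of $A$ is linearly independent over $\mathbb{R}$, courtesy of the embedded identity submatrix. The hypothesis $k\geq\sqrt{2n}$ is essentially $\binom{k}{2}\geq n$, a birthday-paradox-style condition asserting that each $k$-subset already encodes more pair-separation constraints than there are elements in $[n]$.

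The hardest step will be converting these rigidity ingredients into the sharp bound $|\mathbb{F}|\geq n$. I expect it to require either a Fisher-type inequality that exploits the $k$-wise linear independence of the columns, or a weighted refinement of the triple-counting inequality in which rows of size $a_F\approx n/k$ are charged against the antichain structure of the signatures to eliminate the slack in the naive pointwise maximization. This is the step where the precise $\sqrt{2n}$ threshold---not just $k=\Omega(\sqrt{n})$---should enter critically, and finding the right weighting scheme is the main obstacle to overcome.
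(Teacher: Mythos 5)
The paper does not prove this statement; it is quoted verbatim from~\cite{clementi2003distributed}, so there is no internal proof to compare against, and your attempt must be judged on its own. Your double count is correct: counting triples $(F,z,Z)$ with $|Z|=k$, $z\in Z$, $F\cap Z=\{z\}$ gives
\[
\sum_{F\in\mathbb{F}}|F|\binom{n-|F|}{k-1}\;\geq\;k\binom{n}{k}\;=\;n\binom{n-1}{k-1},
\]
and since $a\binom{n-a}{k-1}$ is maximized near $a\approx n/k$ at roughly $n\binom{n-1}{k-1}/(ek)$, this inequality alone yields $|\mathbb{F}|=\Omega(k)$, which at the threshold is only $\Omega(\sqrt{n})$---quadratically short of the claim. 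Your two structural observations are also correct (pairwise incomparability of the column signatures follows from $Z=\{i,j\}$; every $k$ columns contain a $k\times k$ permutation submatrix and hence are linearly independent over $\mathbb{R}$), but, as you anticipate, neither upgrades the count to $n$: $k$-wise linear independence of $n$ vectors in $\mathbb{R}^m$ does not force $m\geq n$ once $k\leq m$, and the Sperner bound coming from the antichain condition is only logarithmic in $n$.

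The missing ingredient is not a cleverer weighting of the triple count but a separate extremal-set-theory lemma. The standard framing is to view the columns $G_i=\{F\in\mathbb{F}: i\in F\}$, $i\in[n]$, as a $(k-1)$-cover-free family of $n$ sets over the ground set $\mathbb{F}$ (no $G_i$ is contained in the union of any $k-1$ others), and then invoke a ground-set lower bound of Erd\H{o}s--Frankl--F\"uredi / Dyachkov--Rykov type, which forces an $r$-cover-free family of $n$ sets to live over a ground set of size at least $\min\{n,\Theta(r^2)\}$; the hypothesis $k\geq\sqrt{2n}$ then collapses that minimum to $n$. That lemma is itself a nontrivial theorem, and it is precisely the step your sketch leaves open. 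One practical note: the only place Theorem~\ref{thm:ssf} is invoked in the paper, inside the proof of Theorem~\ref{thm:noninteractive}, uses $k=n$, and for that case there is a one-line direct proof---take $Z=[n]$ and observe that the set selecting each $z$ must be $\{z\}$ itself, so $\mathbb{F}$ contains all $n$ singletons.
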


\noindent We can leverage this fact to establish our lower bound on non-interactive
contention resolution.

\begin{theorem}
Assume deterministic algorithm $A$ and advice function $f_A$ solve $b(n)$-non-interactive contention resolution in a network of size $n$.
It follows that $b(n) \geq \log{n}$.
\label{thm:noninteractive}
\end{theorem}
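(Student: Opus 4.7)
The plan is to reduce the theorem to a purely combinatorial statement about the family of possible transmit-sets induced by the advice. Since the algorithm is deterministic and every participant in $P$ receives the same advice $f_A(P)$, each node $v \in V$ has a fixed transmission decision as a function of the advice string alone. So for each $a \in \{0,1\}^{b(n)}$, I would define $F_a \subseteq V$ as the set of nodes that transmit upon receiving $a$, and let $\mathbb{F} = \{F_a : a \in \{0,1\}^{b(n)}\}$. This family has at most $2^{b(n)}$ sets, and the success condition of non-interactive contention resolution translates into: for every $P \subseteq V$ with $|P| \geq 1$, there exists some $a$ (namely $f_A(P)$) with $|F_a \cap P| = 1$.

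Next I would specialize this condition to pairs $P = \{u,v\}$ to extract a \emph{separating} property: for every two distinct nodes $u, v$, some $F \in \mathbb{F}$ contains exactly one of them. This is a strictly weaker requirement than being $(n,2)$-strongly selective (we do not need to isolate both $u$ and $v$ by \emph{different} sets, only one of the two), but it is enough for the lower bound.

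The final step is a short pigeonhole argument. Assign to each $v \in V$ the incidence vector $\sigma(v) \in \{0,1\}^{|\mathbb{F}|}$ whose coordinate indexed by $F_a$ is $1$ iff $v \in F_a$. If $2^{b(n)} < n$, then $|\mathbb{F}| < n$ and there are fewer than $n$ possible incidence vectors, so by pigeonhole some distinct $u \neq v$ satisfy $\sigma(u) = \sigma(v)$. But then every $F \in \mathbb{F}$ contains both of $u, v$ or neither of them, so no advice string can isolate exactly one of them on $P = \{u,v\}$, contradicting correctness. Hence $2^{b(n)} \geq n$, i.e., $b(n) \geq \log n$.

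The only genuinely non-trivial step is the first one: recognizing that, because the advice is shared and the algorithm is deterministic, the entire behavior of round one collapses into a family of at most $2^{b(n)}$ fixed subsets of $V$. Once that reduction is in place, the separating-family pigeonhole gives the result immediately, and one can optionally phrase the resulting object in the language of strongly selective families to harmonize with the subsequent lower bounds that invoke Theorem~\ref{thm:ssf}.
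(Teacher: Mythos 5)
Your first step---observing that, because the advice is shared and $A$ is deterministic, round-one behavior collapses to a family $\mathbb{F}$ of at most $2^{b(n)}$ subsets of $V$, with correctness requiring that every nonempty $P$ be isolated by some $F_a$---is exactly the reduction the paper makes. The rest of the argument, however, does not reach $b(n)\geq\log n$, for two related reasons.

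First, the pigeonhole step has an arithmetic slip. The incidence vectors $\sigma(v)$ live in $\{0,1\}^{|\mathbb{F}|}$, so the number of \emph{possible} incidence vectors is $2^{|\mathbb{F}|}$, not $|\mathbb{F}|$. The premise ``$|\mathbb{F}|<n$'' does not force two of the $n$ vectors to coincide; for that you would need $2^{|\mathbb{F}|}<n$, i.e.\ $|\mathbb{F}|<\log n$. Corrected, your argument yields only $2^{b(n)}\geq\log n$, i.e.\ $b(n)\geq\log\log n$, which is far from the claimed bound.

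Second, and more fundamentally, specializing to pairs $P=\{u,v\}$ is too lossy to ever recover $\log n$. What pairs give you is exactly that $\mathbb{F}$ is a \emph{separating family}, and separating families over $[n]$ can have size as small as $\lceil\log n\rceil$ (take $F_i=\{v:\text{$i$-th bit of $v$'s id is }1\}$). So no refinement of the pair argument can push past $|\mathbb{F}|=\Omega(\log n)$, hence past $b(n)=\Omega(\log\log n)$. The bound $b(n)\geq\log n$ genuinely requires using large participant sets: for instance, $P=V$ forces some $F_a$ to be a singleton $\{v_1\}$; then $P=V\setminus\{v_1\}$ forces a new $F_{a'}\subseteq\{v_1,v_2\}$ with $v_2\in F_{a'}$; iterating produces $n-O(1)$ distinct members of $\mathbb{F}$. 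The paper packages this via the lower bound on $(n,n)$-strongly selective families (Theorem~\ref{thm:ssf}), which gives $|\mathbb{F}|\geq n$ directly; some argument of this kind, exploiting the full quantification over $P$ rather than only pairs, is what your proof is missing.
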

\begin{proof}
Let $S$ contain all $2^{b(n)}$ possible advice strings.
For each $s\in S$,
let $V(s)$ be the nodes in $P$ that would broadcast
according to $A$ given advice $s$.
By the assumption that $A$ is correct,
for each $P\subseteq V$,
with corresponding advice $s_P = f_A(P)$: $|V(s_P) \cap P| = 1$.
We can therefore consider $\mathbb{F} = \{ V(s) \mid s \in S\}$
to be an $(n,n)$-strongly selective family.
It follows from Theorem~\ref{thm:ssf} 
that $|\mathbb{F}| \geq n$,
which implies $|S| \geq n$,
which implies the number of bits required
to encode elements of $S$ is at least $\log{n}$, as claimed by the theorem.
\end{proof}


\paragraph{Bounds for No Collision Detection.}
Here we build on Theorem~\ref{thm:noninteractive} to prove a tight bound
on the advice required for deterministic solutions to the contention resolution problem.
The problem can be trivially solved in one round given $\log{n}$ bits of advice.
We show that as we reduce this advice to 
$\alpha\log{n}$ bits, for some  positive fraction $\alpha < 1$,
the rounds required grow roughly as $n^{1-\alpha}$.
We will then show this result is tight within constant factors.

\begin{theorem}
Assume deterministic algorithm $A$ and advice function $f$ solve contention resolution
in $t(n)$ rounds in a network of size $n$ with no collision detection,
such that $f$ returns at most $\alpha\log{n}$ bits
of advice, for some positive constant $\alpha < 1$.
It follows that $t(n) \geq n^{1-\alpha}/2$.
\label{thm:detlower:nocd}
\end{theorem}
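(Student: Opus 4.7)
The plan is to reduce contention resolution without collision detection to the non-interactive version covered by Theorem~\ref{thm:noninteractive}, by using the round number of the successful transmission as additional advice. The key structural observation is that in the no collision detection setting, every round prior to success looks identical to the participants: in each such round either zero or at least two players transmit, and in both cases every player detects silence. Hence no player receives any information about other participants before the successful round, which means that for a deterministic algorithm $A$ with advice $s$, each player's decision whether to transmit in round $r$ is a function of $(s,r)$ alone, not of any channel history.

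Given this, I would proceed as follows. First, fix the hypothesized deterministic algorithm $A$ with advice function $f$ of size at most $\alpha \log n$, and suppose $A$ solves contention resolution in $t(n)$ rounds on every participant set $P \subseteq V$. For each $P$, let $r(P) \in \{1,\dots,t(n)\}$ denote the first round in which exactly one player of $P$ transmits under $A$ with advice $f(P)$. Build a new advice function $f'$ that returns the pair $(f(P), r(P))$, encoded in $\alpha \log n + \lceil \log t(n) \rceil$ bits. Define a new algorithm $A'$ that, on input advice $(s, r)$, has each player transmit in round $1$ if and only if it would have transmitted in round $r$ under $A$ with advice $s$. By the observation above, this decision is well-defined and depends only on $(s,r)$, not on any history. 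By the definition of $r(P)$, exactly one player of $P$ transmits in this single round, so $(A', f')$ solves non-interactive contention resolution with advice length $\alpha \log n + \lceil \log t(n) \rceil$.

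Applying Theorem~\ref{thm:noninteractive} to $(A',f')$ yields
\[
\alpha \log n + \lceil \log t(n) \rceil \;\geq\; \log n,
\]
so $\lceil \log t(n) \rceil \geq (1-\alpha)\log n$, and hence $t(n) \geq n^{1-\alpha}/2$, absorbing the ceiling into the factor of $2$. This establishes the claimed bound.

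The main obstacle is making rigorous the claim that players have no information across rounds before success, so that round-indexed transmission choices can be faithfully simulated in a single round of the non-interactive model. This is where the no collision detection assumption is essential: with collision detection, a player's decision in round $r$ could depend on the pattern of collisions and silences observed so far, and the simulation would not be valid. In the no collision detection case, however, the only pre-success channel state is silence, so the players' behaviors in rounds $1,\dots,t(n)$ are effectively a predetermined schedule indexed by $(s,r)$, and the reduction goes through cleanly. The rest is routine bit-counting against Theorem~\ref{thm:noninteractive}.
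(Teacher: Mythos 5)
Your proof is correct and follows essentially the same route as the paper's: use the round number of success as extra advice, exploit the fact that players in the no-collision-detection model observe only silence before success (so their round-$r$ behavior is a deterministic function of the advice and $r$ alone), and then apply Theorem~\ref{thm:noninteractive} to the resulting $b(n) = \alpha\log n + \lceil \log t(n)\rceil$ bits of non-interactive advice. The bit-counting at the end is also the same.
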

\begin{proof}
Fix some $A$, $f$, $t$, $n$, and $\alpha$, as specified.
%
Our strategy is to deploy the advice function $f$, designed for general
contention resolution without collision detection, to create a related advice function $f'$
for the non-interactive setting.
To do so, given a participant set $P \subseteq V$, consider the execution 
that results when the participants in $P$ run $A$ given initial advice $f(P)$.
By assumption, this execution
solves contention resolution by round $r \leq t(n)$.
Crucially, because $A$ assumes no collision detection, no participant in this execution receives
or detects anything until round $r$. 
Therefore, each participant can simulate the first $r-1$ rounds
locally by simply simulating $A$ detecting silence for $r-1$  rounds.

It follows that to solve non-interaction contention resolution it is enough
for $f'(P)$
return the pair $(f(P), r)$, allowing each node in $P$ to locally simulate $A$
with advice $f(P)$
through the first $r-1$ rounds, and then simply execute round $r$, solving contention resolution
in one round without any interaction.
To encode this advice requires $f'$ to return
$f(P) + \lceil \log{t(n)} \rceil  =  \alpha\log{n} + \lceil \log t(n) \rceil$ bits.
By Theorem~\ref{thm:noninteractive}, this sum must be at least $\log{n}$.
It follows:

\begin{eqnarray*}
\alpha\log{n} + \lceil \log(t(n)) \rceil & \geq & \log{n}  \Rightarrow \\
\lceil \log(t(n)) \rceil & \geq & \log{n} - \alpha\log{n} \Rightarrow \\
 \log(t(n)) + 1 & \geq & (1-\alpha)\log{n}  \Rightarrow \\
2^{ \log(t(n))}2 & \geq & 2^{(1-\alpha)\log{n}} \Rightarrow \\
2t(n) & \geq & n^{1-\alpha} \Rightarrow \\
t(n) \geq & n^{1-\alpha}/2, \\
\end{eqnarray*}

\noindent  as claimed by our theorem.
\end{proof}

It is straightforward to show that
this lower bound is tight (within constant factors).
Assume we are restricted to $q = \alpha\log{n}$ bits of advice for some positive $\alpha<1$.
Assign the $n$ possible participants to the leaves of a balanced binary tree of height $\lceil \log{n} \rceil$.
Given $\log{n}$ bits of advice, it is possible to encode a traversal 
through this tree from the root down to a specific leaf, corresponding the id of a participant.
For $q<\log{n}$ bits,
we can record the first $q$ steps of such a traversal toward a participant.
Given this traversal prefix, all participants can reduce the number of possible identities
of the target participant by a factor of $2$ with each step deeper into the tree.
The result is that only $n/q = n/2^{\alpha\log{n}} = n/n^{\alpha} = n^{1-\alpha}$
possible identities remain. The participants can solve the problem in an additional
$n^{1-\alpha}$ rounds by giving each of these remaining identities one round to transmit
alone. The resulting algorithm is within a constant factor of the bound proved above.

 \paragraph{Bounds for Collision Detection.}
We now consider the collision detection case.
As mentioned, with no advice, $\Theta(\log{n})$ is a tight bound
on contention resolution with collision detection.
Here we prove that $b$ bits of advice can improve this by at most an additive
factor of $b$.

\begin{theorem}
Assume deterministic algorithm $A$ and advice function $f$ solve contention resolution
in $t(n)$ rounds in a network of size $n$ with  collision detection,
such that $f$ returns at most $b(n)$ bits
of advice.
It follows that $t(n) \geq \log{n} - b(n)$.
\label{thm:detlower:cd}
\end{theorem}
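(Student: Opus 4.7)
The plan is to mirror the reduction used in Theorem~\ref{thm:detlower:nocd}, pushing a deterministic $A$ with collision detection through to a non-interactive contention resolution algorithm, and then invoking Theorem~\ref{thm:noninteractive}. The key difference from the no collision detection case is that with collision detection the outcomes of intermediate rounds (collision versus silence) are observed by the participants and can influence their subsequent broadcast decisions, so participants cannot simply ``simulate silence'' locally for all pre-success rounds. I would compensate for this by folding the actual collision history into the advice string itself.

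In more detail, fix $A$, $f$, and $t(n)$ as in the statement, and consider any participant set $P \subseteq V$. Running $A$ with advice $f(P)$ solves contention resolution in some round $r_P \leq t(n)$; let $h_P \in \{0,1\}^{r_P - 1}$ be the binary string recording whether each of the first $r_P - 1$ rounds was silent or experienced a collision. I would define a new advice function $f'(P)$ that returns the pair $(f(P), h_P)$, with $f(P)$ padded out to exactly $b(n)$ bits so the boundary between the two components is unambiguous. Given this advice, every participant can locally replay the first $r_P - 1$ rounds of $A$ by feeding $h_P$ into the state transitions in place of real channel observations, then execute round $r_P$; by construction exactly one participant transmits in that single non-interactive round. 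Hence $(A, f')$ solves non-interactive contention resolution.

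Counting bits, the worst-case length of $f'(P)$ is at most $b(n) + t(n) - 1$. Theorem~\ref{thm:noninteractive} forces this quantity to be at least $\log{n}$, which rearranges to $t(n) \geq \log{n} - b(n) + 1 \geq \log{n} - b(n)$, as claimed. The only delicate point I foresee is the bookkeeping of the encoding: one must make sure that decoding $(f(P), h_P)$ from a single binary string is unambiguous even though $h_P$ has variable length. Padding $f(P)$ to exactly $b(n)$ bits resolves this, since the remaining suffix is then unambiguously $h_P$, and the participant reads bits of $h_P$ one round at a time as it simulates $A$, so no explicit length field is needed. Everything else is a direct transcription of the no collision detection reduction.
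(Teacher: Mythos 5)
Your proposal is correct and follows essentially the same reduction as the paper: encode the success round's collision history into the advice, simulate locally, and invoke Theorem~\ref{thm:noninteractive}. Your extra care about padding $f(P)$ to exactly $b(n)$ bits so the boundary with $h_P$ is unambiguous is a sensible bookkeeping point that the paper leaves implicit, and your bit count $b(n)+t(n)-1$ is marginally tighter than the paper's $b(n)+t(n)$, though both yield the stated bound.
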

\begin{proof}
Fix some $A$, $f$, $t$, $b$ and $n$ as specified.
We use these tools to build an advice function $f'$ to solve non-interactive contention resolution.
In particular, given a participant set $P \subseteq V$,
$f'(P)$ can include the advice $f(P)$ as well as the round $r \leq t(n)$
during which these participants running $A$ with advice $f(P)$  solve contention resolution.
Unlike in the no collision detection case, however, this advice is not sufficient on its
own for participants to simulate $A$'s behavior during round $r$,
as the participants also need to know the {\em pattern} of collisions and no collisions
that occur in the first $r-1$ rounds of this simulation.

Accordingly, we have $f'(P)$ instead return $f(P)$ plus a bit string of length
$r-1$ that encodes the collision history (e.g., with $0$=silence and $1$=collision) of
the first $r-1$ rounds of participants in $P$ running $A$ with this advice.
This allows each participant to locally simulate these $r-1$ rounds and proceed
with the behavior of round $r$ which solves the non-interactive problem.

We know from Theorem~\ref{thm:noninteractive} that solving non-interactive contention resolution
requires at least $\log{n}$ bits of advice.
Therefore, it must be the case that

\[b(n) + t(n) \geq \log{n} \Rightarrow t(n) \geq \log{n} - b(n),\]

\noindent as claimed.
\end{proof}

To show this bound tight,
we first note that
a standard solution for contention resolution with collision detection is
to construct a balanced binary tree of height $\lceil \log{n} \rceil$ with
one potential participant assigned to each leaf. Using the collision detector,
participants can traverse from the root to the leaf corresponding to one of the active
participants (using silence/collision to indicate a vote for descending
to the left/right subtree). 
An obvious way to deploy $b(n)$ bits of advice to augment
this algorithm is to provide the participants
the first $b(n)$ steps of the traversal toward an active participant,
leaving them with only $\lceil \log{n} \rceil - b(n)$ steps remaining to 
complete the identification of a single node to transmit.
Resolving the impact of the ceilings provides an almost exactly
matching upper bound 
of $\log{n} - b(n) + 1$ rounds.

\subsection{Tight Bounds for Randomized Algorithms}

We now consider tight bounds on the maximum possible
speed up to the expected
time complexity of randomized algorithms given a bounded
amount  of advice.
These results deploy two styles of reduction arguments.
An interesting property of the lower bound results that follow is that because
we ultimately reduce to a lower bound that holds for non-uniform algorithms
(from~\cite{newport2014radio}), these new bounds also hold for non-uniform algorithms.

\begin{theorem}
Fix a network size $n$ and advice size $b(n) < \log{n}$.
Let $t(n)$ be the smallest bound such that an algorithm augmented
with an advice function that provides $b(n)$ bits of advice can solve
contention resolution with no collision detection in $t(n)$ rounds in expectation.
It follows that $t(n) = \Theta(\log(n)/2^{b(n)})$.
\label{thm:randlower:nocd}
\end{theorem}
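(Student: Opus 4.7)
The plan is to establish matching upper and lower bounds of $\Theta(\log n / 2^{b(n)})$, with the lower bound reduced to the known $\Omega(\log n)$ expected-time bound for no-advice contention resolution without collision detection from~\cite{newport2014radio} (which, crucially, applies to non-uniform algorithms).

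For the upper bound, I would partition the $\lceil \log n \rceil$ geometric size ranges in $L(n)$ into $2^{b(n)}$ contiguous blocks, each of size $O(\log n / 2^{b(n)})$. Given a participant set $P$ of size $k$, the advice function outputs the index of the block containing $\lceil \log k \rceil$; this requires exactly $b(n)$ bits. The algorithm then cycles repeatedly through the $O(\log n / 2^{b(n)})$ transmission probabilities $2^{-i}$ for the ranges $i$ in the indicated block. In each cycle, the round using the probability closest to $1/k$ succeeds with constant probability by the standard calculation in Lemma~\ref{lem:testprob}, so the expected number of cycles until success is $O(1)$ and the total expected round complexity is $O(\log n / 2^{b(n)})$.

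For the lower bound, I would use a simulation-style reduction. Suppose algorithm $A$ with $b(n)$-bit advice function $f_A$ solves contention resolution with no collision detection in $t(n)$ expected rounds. Define a no-advice algorithm $A'$ as follows: fix a globally-known enumeration $s_1, s_2, \ldots, s_{2^{b(n)}}$ of the possible advice strings. Cycling through this enumeration repeatedly, $A'$ has every participant simulate $A$ using advice $s_j$ for a window of $2t(n)$ rounds, then reset its simulated state and move on to $s_{j+1}$. When the enumeration reaches the correct advice string $s^* = f_A(P)$ for the actual participant set $P$, the simulation is faithful to a real execution of $A$, so by Markov's inequality it succeeds within that $2t(n)$-round window with probability at least $1/2$. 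Hence the expected number of full passes through the enumeration before success is $O(1)$, giving $A'$ an overall expected complexity of $O(2^{b(n)} \cdot t(n))$. Combining with the $\Omega(\log n)$ lower bound yields $t(n) = \Omega(\log n / 2^{b(n)})$.

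The main obstacle is making the reduction coherent: all participants must simulate $A$ in lockstep with the same advice string in each window, despite not knowing $P$ or $f_A(P)$. This is handled by fixing the enumeration order in advance so that every node knows which string is being used in each window. A related subtlety is that during wrong-advice windows $A$ could either accidentally solve the problem (which only helps $A'$) or leave nodes in arbitrary simulated states; the forced reset between windows ensures each window is an independent, faithful simulation, so the correct window's $1/2$ success probability is preserved. Because the Newport lower bound covers non-uniform algorithms and our reduction preserves non-uniformity, the resulting lower bound on $t(n)$ also applies to non-uniform algorithms, as the paper notes immediately before the statement.
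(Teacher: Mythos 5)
Your proposal is correct and takes essentially the same approach as the paper: for the upper bound, it uses advice to restrict the decay-style scan to a block of $O(\log n / 2^{b(n)})$ geometric ranges; for the lower bound, it reduces to the no-advice $\Omega(\log n)$ bound of~\cite{newport2014radio} (which applies to non-uniform algorithms) by simulating $A$ under every possible advice string. The paper phrases the lower-bound simulation as running the $2^{b(n)}$ advice strings ``in parallel'' via round-robin interleaving, which preserves the expectation $2^{b(n)} t(n)$ directly, whereas you use sequential windows of length $2t(n)$ plus Markov's inequality and a geometric argument over passes; this costs a constant factor but is otherwise the same reduction.
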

\begin{proof}
In~\cite{newport2014radio}, it is shown that there exists a positive constant $c$
such that $c\log{n}$
rounds are needed to solve contention resolution in expectation with no collision detection.
If there exists an algorithm that solves contention resolution
with $b(n)$ bits of advice in $t(n) < (c\log{n})/2^{b(n)}$ rounds in expectation,
we could use it to solve contention resolution with no advice in $2^{b(n)}t(n) < c\log{n}$
rounds by simply trying all $2^{b(n)}$ advice strings in parallel,
violating the bound from~\cite{newport2014radio}.

From an upper bound perspective,
this lower bound can be matched by a truncated version of 
the {\em decay} strategy~\cite{bgi}
that uses $b$ bits of advice to reduce the size of the set of $\log{n}$
geometric size ranges to test by a factor of $2^{b(n)}$.
\end{proof}

\begin{theorem}
Fix a network size $n$ and advice size $b(n) < \log\log{n}$.
Let $t(n)$ be the smallest bound such that an algorithm augmented
with an advice function that provides $b(n)$ bits of advice can solve
contention resolution with collision detection in $t(n)$ rounds in expectation.
It follows that $t(n) = \Theta(\log\log{n} - b(n))$.
\label{thm:randlower:cd}
\end{theorem}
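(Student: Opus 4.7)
The theorem asserts matching upper and lower bounds, and I would treat them separately.

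For the upper bound, the plan is to adapt Willard's classical $O(\log\log n)$ collision-detection algorithm~\cite{willard1986log}. That algorithm can be viewed as descending a balanced binary tree of depth $\lceil \log\log n \rceil$ whose leaves are the $\lceil \log n \rceil$ candidate geometric estimates of $\lceil \log k \rceil$, where $k$ is the participant count. The perfect advice function returns the first $b(n)$ bits of the root-to-leaf path corresponding to the true $\lceil \log k \rceil$; on receiving this advice, all nodes begin the search at depth $b(n)$ in the appropriate subtree, which has remaining height $\lceil \log\log n \rceil - b(n)$. The algorithm therefore terminates in $O(\log\log n - b(n))$ expected rounds.

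For the lower bound, the plan is to reduce to the $\Omega(\log\log n)$ no-advice bound of Willard~\cite{willard1986log}, generalized to non-uniform algorithms by Newport~\cite{newport2014radio}. Fix any randomized algorithm $A$ augmented with an advice function $f_A$ of length $b(n)$ and expected running time $t(n)$, and invoke Yao's minimax principle with the hard input distribution in which $|P|$ is drawn uniformly from the $\lceil \log n \rceil$ geometric size classes. Since $f_A$ takes at most $2^{b(n)}$ values, it partitions the support of the input distribution into at most $2^{b(n)}$ equivalence classes, and by pigeonhole there is an advice value $a^*$ whose pre-image meets at least $\lceil \log n \rceil / 2^{b(n)}$ distinct size classes. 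Restricted to those inputs, $A$ with advice hard-coded to $a^*$ is a no-advice randomized algorithm that must solve contention resolution across a distribution supported on that many size classes.

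The main obstacle is to verify that the Willard--Newport $\Omega(\log\log n)$ bound scales as the logarithm of the size-class support of the input distribution, rather than being tied specifically to the full set of $\lceil \log n \rceil$ classes. The plan here is to revisit the information-theoretic core of that lower-bound argument, which tracks how each collision-detection round refines the algorithm's posterior over the unknown size class by at most one bit; the algorithm cannot terminate until this posterior concentrates on a single class, so a prior of support $m$ should force $\Omega(\log m)$ expected rounds. Plugging $m = \lceil \log n \rceil / 2^{b(n)}$ then yields $t(n) = \Omega(\log(\log n / 2^{b(n)})) = \Omega(\log\log n - b(n))$, matching the upper bound and completing the proof.
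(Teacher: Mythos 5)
Your upper bound is essentially the same as the paper's: both truncate Willard's binary search over the $\lceil\log n\rceil$ geometric size classes by having the advice pin down the first $b(n)$ bits of the root-to-leaf path, leaving a subtree of depth $\lceil\log\log n\rceil-b(n)$.

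Your lower bound takes a genuinely different route, and it has a gap that you correctly flag but do not close. The paper never stays in the collision-detection world for its reduction. Instead it composes two already-established pieces: (i) Theorem~\ref{thm:randlower:nocd}, the no-collision-detection advice lower bound $t'(n)=\Omega(\log n/2^{b(n)})$, itself obtained by running all $2^{b(n)}$ advice strings in parallel and invoking the no-advice bound of~\cite{newport2014radio}; and (ii) the simulation from~\cite{newport2014radio} that emulates $t(n)$ rounds of a collision-detection algorithm in $2^{t(n)}$ rounds without collision detection by running all collision histories in parallel. Chaining these gives $2^{t(n)}\geq \Omega(\log n/2^{b(n)})$, hence $t(n)\geq \log\log n - b(n) - O(1)$, with every ingredient a black-box citation. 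Your approach instead pigeonholes on the advice value and then needs a stronger claim: that the Willard/Newport $\Omega(\log\log n)$ lower bound degrades gracefully to $\Omega(\log m)$ when the unknown size is drawn from only $m$ of the geometric size classes. That is not a stated corollary of either~\cite{willard1986log} or~\cite{newport2014radio}, and the one-bit-of-posterior-refinement-per-round heuristic you invoke is not how those proofs are actually structured (Willard's is a delicate probabilistic counting argument, and Newport's non-uniform generalization is itself nontrivial). You would have to reprove a restricted-support version of the bound to make your reduction go through. There is also a secondary subtlety in the pigeonhole step: the advice function depends on the full participant set $P$, not merely $|P|$, so ``advice value $a^*$ meets size class $i$'' only guarantees some representative $P_i$ in that class, and the residual hard distribution must be built over those specific representatives; for non-uniform algorithms this requires a bit more care than the proposal gives. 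The paper's simulation-based composition sidesteps all of this; your route is plausible and would be self-contained if the restricted-support lemma were proved, but as written the central step is conjectured, not established.
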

\begin{proof}
Theorem~\ref{thm:randlower:nocd} tells us that for a given $b(n) < \log{n}$,
you need $t'(n) = \Omega(\log{n}/2^{b(n)})$ rounds to solve contention resolution without contention
resolution and $b(n)$ bits of advice.
We argue that you need  $t(n) \geq \log{t'(n)}$ rounds to solve
contention resolution with this same amount of advice and collision detection.
To prove this, we can apply the simulation strategy from~\cite{newport2014radio}
in which we simulate a collision detection algorithm without collision detection,
by trying, in parallel, all possible collision histories (one of which will
happen to be right).
It is shown in~\cite{newport2014radio} that simulating $t(n)$ rounds of a collision
detection algorithm in this 
manner requires $2^{t(n)}$ rounds.
Therefore, if $t(n)$ was less than $\log{t'(n)}$,
we would have a contention resolution solution without collision detection
that required less than $t'(n)$ rounds---contradicting Theorem~\ref{thm:randlower:nocd}.

Give the constraint that $t(n) \geq \log{t'(n)}$, it follows:

\begin{align*}
    &t(n) \geq \log\left(  (c\log{n})/2^{b(n)} \right) = \log{(c\log{n})} - b(n)\\
    &=O(\log\log{n}) -b(n).
\end{align*}
     
To match this bound we can augment the classical strategy due to Willard~\cite{willard1986log} with $b(n)$ bits of advice.
In the setting where $b(n) =0$,
this strategy conducts a binary search over the geometric
guesses on the network size (i.e., the values in $L(n) = \{ 2^i \mid i \in [\lceil \log{n} \rceil\}$), using a collision to indicate the guess is too small,
and silence to indicate the guess is too large. If each step is repeated
a sufficiently large constant number of times, 
and the geometrically decreasing error probabilities are carefully
summed, it is shown in~\cite{willard1986log} that this
search succeeds in $O(\log{(L(n))}) = O(\log\log{n})$ rounds in expectation.

Given $b(n) < \log\log{n}$ bits of advice, we can reduce the number of values in $L(n)$
to search down to a subset $L'(n)$ of size $\lceil \log{n} \rceil/2^{b(n)}$.
The same analysis as~\cite{willard1986log} still applies as we
are searching over a  subset of $L(n)$, providing an
optimized expected round complexity in:
\[O(\log{(L'(n))}) = O(\log(\log{n}/2^{b(n)})) = O(\log\log{n} - b(n)), \]

\noindent which asymptotically matches our lower bound. For $b(n) \geq \log\log{n}$,
we can directly encode the correct range and subsequently solve the problem in $O(1)$ rounds.
\end{proof}


\bibliographystyle{ACM-Reference-Format}
\bibliography{main}


\begin{thebibliography}{22}


\ifx \showCODEN    \undefined \def \showCODEN     #1{\unskip}     \fi
\ifx \showDOI      \undefined \def \showDOI       #1{#1}\fi
\ifx \showISBNx    \undefined \def \showISBNx     #1{\unskip}     \fi
\ifx \showISBNxiii \undefined \def \showISBNxiii  #1{\unskip}     \fi
\ifx \showISSN     \undefined \def \showISSN      #1{\unskip}     \fi
\ifx \showLCCN     \undefined \def \showLCCN      #1{\unskip}     \fi
\ifx \shownote     \undefined \def \shownote      #1{#1}          \fi
\ifx \showarticletitle \undefined \def \showarticletitle #1{#1}   \fi
\ifx \showURL      \undefined \def \showURL       {\relax}        \fi
\providecommand\bibfield[2]{#2}
\providecommand\bibinfo[2]{#2}
\providecommand\natexlab[1]{#1}
\providecommand\showeprint[2][]{arXiv:#2}

\bibitem[\protect\citeauthoryear{Ashkenazi, Gelles, and Leshem}{Ashkenazi
  et~al\mbox{.}}{2020}]%
        {AGL20}
\bibfield{author}{\bibinfo{person}{Yagel Ashkenazi}, \bibinfo{person}{Ran
  Gelles}, {and} \bibinfo{person}{Amir Leshem}.}
  \bibinfo{year}{2020}\natexlab{}.
\newblock \showarticletitle{Brief Announcement: Noisy Beeping Networks}. In
  \bibinfo{booktitle}{\emph{{PODC} '20: {ACM} Symposium on Principles of
  Distributed Computing, Virtual Event, Italy, August 3-7, 2020}}.
  \bibinfo{publisher}{{ACM}}.
\newblock


\bibitem[\protect\citeauthoryear{Bar-Yehuda, Goldreich, and Itai}{Bar-Yehuda
  et~al\mbox{.}}{1992}]%
        {bgi}
\bibfield{author}{\bibinfo{person}{Reuven Bar-Yehuda}, \bibinfo{person}{Oded
  Goldreich}, {and} \bibinfo{person}{Alon Itai}.}
  \bibinfo{year}{1992}\natexlab{}.
\newblock \showarticletitle{On the time-complexity of broadcast in multi-hop
  radio networks: An exponential gap between determinism and randomization}.
\newblock \bibinfo{journal}{\emph{J. Comput. System Sci.}}
  \bibinfo{volume}{45}, \bibinfo{number}{1} (\bibinfo{year}{1992}),
  \bibinfo{pages}{104--126}.
\newblock


\bibitem[\protect\citeauthoryear{Bu, Potop-Butucaru, and Rabie}{Bu
  et~al\mbox{.}}{2020}]%
        {BPR20}
\bibfield{author}{\bibinfo{person}{Gewu Bu}, \bibinfo{person}{Maria
  Potop-Butucaru}, {and} \bibinfo{person}{Mikael Rabie}.}
  \bibinfo{year}{2020}\natexlab{}.
\newblock \showarticletitle{Wireless Broadcast with Short Labels}. In
  \bibinfo{booktitle}{\emph{Proceedings of the Conference on Networked Systems
  (NetSys 2020)}}.
\newblock


\bibitem[\protect\citeauthoryear{Chlamtac and Kutten}{Chlamtac and
  Kutten}{1985}]%
        {CK85}
\bibfield{author}{\bibinfo{person}{Imrich Chlamtac} {and} \bibinfo{person}{Shai
  Kutten}.} \bibinfo{year}{1985}\natexlab{}.
\newblock \showarticletitle{On Broadcasting in Radio Networks - Problem
  Analysis and Protocol Design}.
\newblock \bibinfo{journal}{\emph{IEEE Transactions on Communications}}
  \bibinfo{volume}{33}, \bibinfo{number}{12} (\bibinfo{year}{1985}),
  \bibinfo{pages}{1240--1246}.
\newblock


\bibitem[\protect\citeauthoryear{Clementi, Monti, and Silvestri}{Clementi
  et~al\mbox{.}}{2003}]%
        {clementi2003distributed}
\bibfield{author}{\bibinfo{person}{Andrea~EF Clementi}, \bibinfo{person}{Angelo
  Monti}, {and} \bibinfo{person}{Riccardo Silvestri}.}
  \bibinfo{year}{2003}\natexlab{}.
\newblock \showarticletitle{Distributed broadcast in radio networks of unknown
  topology}.
\newblock \bibinfo{journal}{\emph{Theoretical Computer Science}}
  \bibinfo{volume}{302}, \bibinfo{number}{1-3} (\bibinfo{year}{2003}),
  \bibinfo{pages}{337--364}.
\newblock


\bibitem[\protect\citeauthoryear{Cover and Thomas}{Cover and Thomas}{2006}]%
        {cover2006}
\bibfield{author}{\bibinfo{person}{Thomas~M. Cover} {and}
  \bibinfo{person}{Joy~A. Thomas}.} \bibinfo{year}{2006}\natexlab{}.
\newblock \bibinfo{booktitle}{\emph{Elements of Information Theory (Wiley
  Series in Telecommunications and Signal Processing)}}.
\newblock \bibinfo{publisher}{Wiley-Interscience}, \bibinfo{address}{USA}.
\newblock
\showISBNx{0471241954}


\bibitem[\protect\citeauthoryear{Dufoulon, Burman, and Beauquier}{Dufoulon
  et~al\mbox{.}}{2020}]%
        {DBB20}
\bibfield{author}{\bibinfo{person}{Fabien Dufoulon}, \bibinfo{person}{Janna
  Burman}, {and} \bibinfo{person}{Joffroy Beauquier}.}
  \bibinfo{year}{2020}\natexlab{}.
\newblock \showarticletitle{Can Uncoordinated Beeps tell Stories?}. In
  \bibinfo{booktitle}{\emph{{PODC} '20: {ACM} Symposium on Principles of
  Distributed Computing, Virtual Event, Italy, August 3-7, 2020}}.
  \bibinfo{publisher}{{ACM}}.
\newblock


\bibitem[\protect\citeauthoryear{Efremenko, Kol, and Saxena}{Efremenko
  et~al\mbox{.}}{2020}]%
        {EKS20}
\bibfield{author}{\bibinfo{person}{Klim Efremenko}, \bibinfo{person}{Gillat
  Kol}, {and} \bibinfo{person}{Raghuvansh~R. Saxena}.}
  \bibinfo{year}{2020}\natexlab{}.
\newblock \showarticletitle{Noisy Beeps}. In \bibinfo{booktitle}{\emph{{PODC}
  '20: {ACM} Symposium on Principles of Distributed Computing, Virtual Event,
  Italy, August 3-7, 2020}}. \bibinfo{publisher}{{ACM}}.
\newblock


\bibitem[\protect\citeauthoryear{Ellen and Gilbert}{Ellen and Gilbert}{2020}]%
        {EG20}
\bibfield{author}{\bibinfo{person}{Faith Ellen} {and} \bibinfo{person}{Seth
  Gilbert}.} \bibinfo{year}{2020}\natexlab{}.
\newblock \showarticletitle{Constant-Length Labeling Schemes for Faster
  Deterministic Radio Broadcast}. In \bibinfo{booktitle}{\emph{Proceedings of
  the 32nd {ACM} on Symposium on Parallelism in Algorithms and Architectures}}
  \emph{(\bibinfo{series}{SPAA})}. \bibinfo{publisher}{ACM}.
\newblock


\bibitem[\protect\citeauthoryear{Ellen, Gorain, Miller, and Pelc}{Ellen
  et~al\mbox{.}}{2019}]%
        {EGMP19}
\bibfield{author}{\bibinfo{person}{Faith Ellen}, \bibinfo{person}{Barun
  Gorain}, \bibinfo{person}{Avery Miller}, {and} \bibinfo{person}{Andrzej
  Pelc}.} \bibinfo{year}{2019}\natexlab{}.
\newblock \showarticletitle{Constant-Length Labeling Schemes for Deterministic
  Radio Broadcast}. In \bibinfo{booktitle}{\emph{Proceedings of the 31st {ACM}
  on Symposium on Parallelism in Algorithms and Architectures}}
  \emph{(\bibinfo{series}{SPAA})}. \bibinfo{publisher}{ACM},
  \bibinfo{pages}{171--178}.
\newblock


\bibitem[\protect\citeauthoryear{Farach-Colton, Fernandes, and
  Mosteiro}{Farach-Colton et~al\mbox{.}}{2006}]%
        {farach2006lower}
\bibfield{author}{\bibinfo{person}{Mart{\'\i}n Farach-Colton},
  \bibinfo{person}{Rohan~J Fernandes}, {and} \bibinfo{person}{Miguel~A
  Mosteiro}.} \bibinfo{year}{2006}\natexlab{}.
\newblock \showarticletitle{Lower bounds for clear transmissions in radio
  networks}. In \bibinfo{booktitle}{\emph{Latin American Symposium on
  Theoretical Informatics}}. Springer, \bibinfo{pages}{447--454}.
\newblock


\bibitem[\protect\citeauthoryear{Fraigniaud, Ilcinkas, and Pelc}{Fraigniaud
  et~al\mbox{.}}{2010}]%
        {FIP10}
\bibfield{author}{\bibinfo{person}{Pierre Fraigniaud}, \bibinfo{person}{David
  Ilcinkas}, {and} \bibinfo{person}{Andrzej Pelc}.}
  \bibinfo{year}{2010}\natexlab{}.
\newblock \showarticletitle{Communication algorithms with advice}.
\newblock \bibinfo{journal}{\emph{J. Comput. Syst. Sci.}} \bibinfo{volume}{76},
  \bibinfo{number}{3-4} (\bibinfo{year}{2010}), \bibinfo{pages}{222--232}.
\newblock


\bibitem[\protect\citeauthoryear{Ilcinkas, Kowalski, and Pelc}{Ilcinkas
  et~al\mbox{.}}{2010}]%
        {IKP10}
\bibfield{author}{\bibinfo{person}{David Ilcinkas}, \bibinfo{person}{Dariusz~R.
  Kowalski}, {and} \bibinfo{person}{Andrzej Pelc}.}
  \bibinfo{year}{2010}\natexlab{}.
\newblock \showarticletitle{Fast radio broadcasting with advice}.
\newblock \bibinfo{journal}{\emph{Theor. Comput. Sci.}} \bibinfo{volume}{411},
  \bibinfo{number}{14-15} (\bibinfo{year}{2010}), \bibinfo{pages}{1544--1557}.
\newblock


\bibitem[\protect\citeauthoryear{Jurdzi{\'n}ski and Stachowiak}{Jurdzi{\'n}ski
  and Stachowiak}{2002}]%
        {jurdzinski2002probabilistic}
\bibfield{author}{\bibinfo{person}{Tomasz Jurdzi{\'n}ski} {and}
  \bibinfo{person}{Grzegorz Stachowiak}.} \bibinfo{year}{2002}\natexlab{}.
\newblock \showarticletitle{Probabilistic algorithms for the wakeup problem in
  single-hop radio networks}. In \bibinfo{booktitle}{\emph{International
  Symposium on Algorithms and Computation}}. Springer,
  \bibinfo{pages}{535--549}.
\newblock


\bibitem[\protect\citeauthoryear{Lattanzi, Lavastida, Moseley, and
  Vassilvitskii}{Lattanzi et~al\mbox{.}}{2020}]%
        {LLMV20}
\bibfield{author}{\bibinfo{person}{Silvio Lattanzi}, \bibinfo{person}{Thomas
  Lavastida}, \bibinfo{person}{Benjamin Moseley}, {and} \bibinfo{person}{Sergei
  Vassilvitskii}.} \bibinfo{year}{2020}\natexlab{}.
\newblock \showarticletitle{Online Scheduling via Learned Weights}. In
  \bibinfo{booktitle}{\emph{Proceedings of the 2020 {ACM-SIAM} Symposium on
  Discrete Algorithms, {SODA} 2020, Salt Lake City, UT, USA, January 5-8,
  2020}}. \bibinfo{publisher}{{SIAM}}.
\newblock


\bibitem[\protect\citeauthoryear{Mitzenmacher}{Mitzenmacher}{2018}]%
        {M18}
\bibfield{author}{\bibinfo{person}{Michael Mitzenmacher}.}
  \bibinfo{year}{2018}\natexlab{}.
\newblock \showarticletitle{A Model for Learned Bloom Filters and Optimizing by
  Sandwiching}. In \bibinfo{booktitle}{\emph{Advances in Neural Information
  Processing Systems 31: Annual Conference on Neural Information Processing
  Systems 2018, NeurIPS 2018, December 3-8, 2018, Montr{\'{e}}al, Canada}}.
\newblock


\bibitem[\protect\citeauthoryear{Mitzenmacher and Vassilvitskii}{Mitzenmacher
  and Vassilvitskii}{2020}]%
        {mitzenmacher2020algorithms}
\bibfield{author}{\bibinfo{person}{Michael Mitzenmacher} {and}
  \bibinfo{person}{Sergei Vassilvitskii}.} \bibinfo{year}{2020}\natexlab{}.
\newblock \showarticletitle{Algorithms with predictions}.
\newblock \bibinfo{journal}{\emph{arXiv preprint arXiv:2006.09123}}
  (\bibinfo{year}{2020}).
\newblock


\bibitem[\protect\citeauthoryear{Newport}{Newport}{2014}]%
        {newport2014radio}
\bibfield{author}{\bibinfo{person}{Calvin Newport}.}
  \bibinfo{year}{2014}\natexlab{}.
\newblock \showarticletitle{Radio network lower bounds made easy}. In
  \bibinfo{booktitle}{\emph{International Symposium on Distributed Computing}}.
  Springer, \bibinfo{pages}{258--272}.
\newblock


\bibitem[\protect\citeauthoryear{Pliam}{Pliam}{2000}]%
        {pliam2000incomparability}
\bibfield{author}{\bibinfo{person}{John~O Pliam}.}
  \bibinfo{year}{2000}\natexlab{}.
\newblock \showarticletitle{On the incomparability of entropy and marginal
  guesswork in brute-force attacks}. In \bibinfo{booktitle}{\emph{International
  Conference on Cryptology in India}}. Springer, \bibinfo{pages}{67--79}.
\newblock


\bibitem[\protect\citeauthoryear{Purohit, Svitkina, and Kumar}{Purohit
  et~al\mbox{.}}{2018}]%
        {KPS18}
\bibfield{author}{\bibinfo{person}{Manish Purohit}, \bibinfo{person}{Zoya
  Svitkina}, {and} \bibinfo{person}{Ravi Kumar}.}
  \bibinfo{year}{2018}\natexlab{}.
\newblock \showarticletitle{Improving Online Algorithms via {ML} Predictions}.
  In \bibinfo{booktitle}{\emph{Advances in Neural Information Processing
  Systems 31: Annual Conference on Neural Information Processing Systems 2018,
  NeurIPS 2018, December 3-8, 2018, Montr{\'{e}}al, Canada}}.
\newblock


\bibitem[\protect\citeauthoryear{Shannon}{Shannon}{1948}]%
        {shannon1948mathematical}
\bibfield{author}{\bibinfo{person}{Claude~E Shannon}.}
  \bibinfo{year}{1948}\natexlab{}.
\newblock \showarticletitle{A mathematical theory of communication}.
\newblock \bibinfo{journal}{\emph{The Bell system technical journal}}
  \bibinfo{volume}{27}, \bibinfo{number}{3} (\bibinfo{year}{1948}),
  \bibinfo{pages}{379--423}.
\newblock


\bibitem[\protect\citeauthoryear{Willard}{Willard}{1986}]%
        {willard1986log}
\bibfield{author}{\bibinfo{person}{Dan~E Willard}.}
  \bibinfo{year}{1986}\natexlab{}.
\newblock \showarticletitle{Log-logarithmic selection resolution protocols in a
  multiple access channel}.
\newblock \bibinfo{journal}{\emph{SIAM J. Comput.}} \bibinfo{volume}{15},
  \bibinfo{number}{2} (\bibinfo{year}{1986}), \bibinfo{pages}{468--477}.
\newblock


\end{thebibliography}

\end{document}